\pgfplotsset{compat=1.18}
\theoremstyle{definition}
\newtheorem{definition}{Definition}[section]
\newtheorem{assumption}{Assumption}[section]
\newtheorem{example}{Example}[section]
\theoremstyle{plain}
\newtheorem{theorem}{Theorem}[section]
\newtheorem{proposition}{Proposition}[section]
\newtheorem{lemma}{Lemma}[section]
\newtheorem{corollary}{Corollary}[section]
\theoremstyle{remark}
\newtheorem{remark}{Remark}[section]
\newcommand{\R}{\mathbb{R}}
\newcommand{\norm}[1]{\left\lVert #1\right\rVert}
\DeclareMathOperator{\Cov}{Cov}
\newcommand{\trS}{\operatorname{Tr}_{S}}                 
\newcommand{\trB}{\operatorname{Tr}_{\partial\Omega}}    
\title{\textbf{Geostatistics from Elliptic Boundary-Value Problems:}\\
\textbf{Green Operators, Transmission Conditions, and Schur Complements}}
\author{J.\ J.\ Segura\\
\small Universidad Andres Bello, Facultad de Ingeniería, Santiago, Chile\\
\small \texttt{juan.segura.f@unab.cl} \quad ORCID: 0009-0004-2742-4353}
\date{}
\begin{document}
\maketitle

\begin{abstract}
Classical geostatistics encodes spatial dependence by prescribing variograms or covariance kernels on Euclidean domains, whereas the SPDE--GMRF paradigm specifies Gaussian fields through an elliptic precision operator whose inverse is the corresponding Green operator.
We develop an operator-based formulation of Gaussian spatial random fields on bounded domains and manifolds with internal interfaces, treating boundary and transmission conditions as explicit components of the statistical model.
Starting from coercive quadratic energy functionals, variational theory yields a precise precision--covariance correspondence and shows that variograms are derived quadratic functionals of the Green operator, hence depend on boundary conditions and domain geometry.
Conditioning and kriging follow from standard Gaussian update identities in both covariance and precision form, with hard constraints represented equivalently by exact interpolation constraints or by distributional source terms.
Interfaces are modelled via surface penalty terms; taking variations produces flux-jump transmission conditions and induces controlled attenuation of cross-interface covariance.
Finally, boundary-driven prediction and domain reduction are formulated through Dirichlet-to-Neumann operators and Schur complements, providing an operator language for upscaling, change of support, and subdomain-to-boundary mappings.
Throughout, we use tools standard in spatial statistics and elliptic PDE theory to keep boundary and interface effects explicit in covariance modeling and prediction.
\end{abstract}

\noindent\textbf{MSC 2020:} 60G60, 62M30, 35J08, 35R60, 58J32.\\
\textbf{Keywords:} Gaussian random field; geostatistics; SPDE; Green operator; manifold; interfaces; Dirichlet-to-Neumann map; kriging; conditional simulation.

\section{Introduction}\label{sec:intro}

Spatial random fields are central in geostatistics and spatial statistics, where dependence is commonly described through covariance functions and variograms \citep{Matheron1963,Matheron1971,Cressie1993,ChilesDelfiner2012}.
A complementary operator-centric viewpoint defines Gaussian fields through \emph{precision operators}, often via elliptic SPDE models, so that covariances are Green operators of boundary-value problems \citep{Whittle1954,Whittle1967,LindgrenRueLindstrom2011,LindgrenBolinRue2022}.
This perspective is especially informative on bounded or irregular domains and in the presence of internal interfaces, because boundary conditions and transmission conditions enter the Green operator and therefore directly shape covariance, variograms, and kriging predictions.

\medskip
\noindent\textbf{Related work and positioning.}
Three strands are closest to the present article.
\emph{(i) SPDE--Mat\'ern and GMRF discretizations} make the operator--kernel duality computationally central by defining Mat\'ern fields as solutions of Whittle-type operators and exploiting sparse precisions \citep{Whittle1954,Whittle1967,LindgrenRueLindstrom2011,LindgrenBolinRue2022}.
\emph{(ii) Energy-based spatial random fields} (including Sparta/SRF constructions) build covariance models from Boltzmann--Gibbs energy functionals, emphasizing locality, statistical mechanics structure, and Green-function correlations \citep{HristopulosElogne2007,Hristopulos2020,AllardHristopulosOpitz2021,Hristopulos2022BGSPH}.
\emph{(iii) Gaussian free fields and QFT free fields} provide the mathematical template: quadratic actions on manifolds, Green operators, and (when needed) distributional kernels \citep{Cardy1996,Polyakov1987,DiFrancescoMathieuSenechal1997,Polchinski1998}.
The emphasis here is \emph{operator-first}: we keep boundary conditions and interfaces explicit at the level of the Green operator, and we place (a) surface-defect interfaces and their variational transmission conditions, (b) boundary effective operators (Dirichlet-to-Neumann) and exact domain reduction (Schur complements), and (c) the metric vs.\ deformation duality for non-stationarity, into a single narrative aligned with spatial statistical modeling and elliptic PDE theory on bounded domains.

\medskip
\noindent\textbf{Technical ingredients used.}
The rigorous layer uses only coercive variational theory (Lax--Milgram), Gaussian conditioning identities,
and exact operator reductions (Schur complements/DtN) for elliptic boundary-value problems.

\medskip
\noindent\textbf{Methodological commitment.}
We maintain two layers throughout:
\begin{itemize}[leftmargin=2em]
\item \textbf{Rigorous backbone:} coercive forms $\to$ Green operators (Lax--Milgram); Gaussian vectors/fields with precision $\leftrightarrow$ covariance; conditioning identities; Schur complements; DtN maps; interface transmission from surface terms; deformation pullbacks.
\item \textbf{Interpretative layer:} worldsheet/QFT vocabulary (actions, sources, defects/branes, bulk--boundary maps) used to organize modeling choices without importing physical string dynamics.
\end{itemize} 
\noindent\emph{Terminology note.} We use the terms \emph{energy}, \emph{source}, and \emph{defect} as shorthand for quadratic forms, linear functionals (observations/forcing), and interface penalty terms, respectively; no physical interpretation is assumed.

\medskip
\noindent\textbf{Structure.}
Section~\ref{sec:framework} sets the operator backbone and clarifies ``operator inverse'' vs ``integral kernel'' (Section~\ref{sec:kernel-vs-operator}).
Section~\ref{sec:polyakov} specializes to Polyakov-type quadratic actions and metric-induced anisotropy/non-stationarity.
Section~\ref{sec:worked} gives worked micro-examples showing boundary-condition and interface dependence of Green kernels/variograms.
Sections~\ref{sec:product}--\ref{sec:conditioning} treat nested structures (product manifolds), conditioning/kriging, and ``hard'' constraints as sources.
Section~\ref{sec:defects} formalizes defects/interfaces as surface actions.
Section~\ref{sec:deformation} treats deformation vs metric geometry for non-stationarity.
Sections~\ref{sec:cos}--\ref{sec:boundary} treat change of support, DtN maps, and Schur complements (domain reduction).
Section~\ref{sec:path} restates Gaussian identities in generating-functional notation and summarizes conditional simulation.
Section~\ref{sec:synthesis} summarizes the main dualities and provides implementation pseudocode.

\medskip
\noindent\textbf{Contributions (U1).}
The main contributions are structural and exact at the Gaussian/operator level:
\begin{itemize}[leftmargin=2em]
\item[(C1)] A unified coercive-form (operator-first) formulation for spatial Gaussian fields on bounded domains/manifolds in which \emph{boundary conditions are part of the covariance/variogram} through the Green operator of a boundary-value problem.
\item[(C2)] A defect/interface modelling device: a surface energy penalty $\frac{\alpha}{2}\int_S (\trS Z)^2\,dS$ yields a standard flux-jump transmission condition by variation and induces \emph{controlled cross-interface covariance attenuation}, illustrated by a worked micro-example.
\item[(C3)] A boundary-reduction viewpoint: Dirichlet-to-Neumann maps and discrete Schur complements are presented as the continuum/discrete forms of exact Gaussian elimination, providing a single operator language for upscaling, domain restriction, and bulk--boundary mappings.
\end{itemize}

\section{Operator Backbone: Actions, Green Operators, and Gaussian Fields}\label{sec:framework}

\subsection{Energy spaces and coercive forms}

\begin{assumption}[Prototypical elliptic setting]\label{ass:elliptic}
Let $\Omega\subset\R^d$ be a bounded Lipschitz domain.
Let $A:\Omega\to\R^{d\times d}$ be measurable, symmetric, and uniformly elliptic: there exist
$0<a_{\min}\le a_{\max}<\infty$ such that for a.e.\ $x\in\Omega$ and all $\xi\in\R^d$,
\[
a_{\min}\norm{\xi}^2 \le \xi^\top A(x)\xi \le a_{\max}\norm{\xi}^2.
\]
Let $c\in L^\infty(\Omega)$ satisfy $c(x)\ge c_0>0$ a.e.
\end{assumption}

\begin{definition}[Energy space and bilinear form]\label{def:energy}
Let $V:=H^1_0(\Omega)$ (Dirichlet boundary condition, for concreteness).
Define
\[
a(u,v)
:=\int_\Omega \nabla u(x)^\top A(x)\nabla v(x)\,dx
+\int_\Omega c(x)\,u(x)v(x)\,dx,
\qquad u,v\in V.
\]
\end{definition}

\begin{lemma}[Boundedness and coercivity]\label{lem:coercive}
Under Assumption~\ref{ass:elliptic}, there exist $M,\alpha>0$ such that
\[
|a(u,v)|\le M\norm{u}_{H^1(\Omega)}\norm{v}_{H^1(\Omega)},\qquad
a(u,u)\ge \alpha\norm{u}_{H^1(\Omega)}^2.
\]
\end{lemma}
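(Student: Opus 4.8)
The plan is to verify the two inequalities term by term, treating the second-order (diffusion) part and the zeroth-order (reaction) part of $a(\cdot,\cdot)$ separately, since each is governed by one pair of the ellipticity constants in Assumption~\ref{ass:elliptic}. Throughout I would work with the decomposition $\norm{u}_{H^1(\Omega)}^2=\norm{u}_{L^2(\Omega)}^2+\norm{\nabla u}_{L^2(\Omega)}^2$, so that estimates on the two pieces combine directly into estimates on the full $H^1$ norm.

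For \emph{boundedness}, I would first note that the lower ellipticity bound forces $A(x)$ to be symmetric positive definite for a.e.\ $x$, so $\xi^\top A(x)\eta$ is a genuine inner product on $\R^d$; the generalized Cauchy--Schwarz inequality together with the upper bound then yields the pointwise estimate $|\nabla u(x)^\top A(x)\nabla v(x)|\le a_{\max}\,\norm{\nabla u(x)}\,\norm{\nabla v(x)}$ (equivalently, the eigenvalues of $A(x)$ lie in $[a_{\min},a_{\max}]$, so its operator norm is at most $a_{\max}$). Applying the Cauchy--Schwarz inequality in $L^2(\Omega)$ bounds the diffusion integral by $a_{\max}\norm{\nabla u}_{L^2(\Omega)}\norm{\nabla v}_{L^2(\Omega)}$, while the reaction integral is bounded by $\norm{c}_{L^\infty(\Omega)}\norm{u}_{L^2(\Omega)}\norm{v}_{L^2(\Omega)}$ directly. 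Summing the two and applying a final discrete Cauchy--Schwarz to the pairs $(\norm{\nabla u}_{L^2},\norm{u}_{L^2})$ and $(\norm{\nabla v}_{L^2},\norm{v}_{L^2})$ gives $|a(u,v)|\le M\,\norm{u}_{H^1(\Omega)}\norm{v}_{H^1(\Omega)}$ with $M=\max\{a_{\max},\norm{c}_{L^\infty(\Omega)}\}$.

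For \emph{coercivity}, I would use only the lower bounds: setting $v=u$, the lower ellipticity bound gives $\int_\Omega\nabla u^\top A\,\nabla u\,dx\ge a_{\min}\norm{\nabla u}_{L^2(\Omega)}^2$, and $c\ge c_0$ gives $\int_\Omega c\,u^2\,dx\ge c_0\norm{u}_{L^2(\Omega)}^2$. Adding these and factoring out the smaller constant yields $a(u,u)\ge\min\{a_{\min},c_0\}\,\norm{u}_{H^1(\Omega)}^2$, so one may take $\alpha=\min\{a_{\min},c_0\}$.

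The point I would emphasize is that there is essentially \emph{no} genuine obstacle here, and the only mildly delicate step is the pointwise generalized Cauchy--Schwarz bound used for boundedness. The decisive structural feature is that the strictly positive reaction coefficient $c\ge c_0>0$ supplies the full $L^2$ control by itself, so coercivity holds \emph{without} invoking the Poincaré inequality and, in particular, independently of the boundary condition encoded in $V$. This uniformity is exactly what lets the subsequent Lax--Milgram step apply unchanged across the Dirichlet, Neumann, and transmission settings treated later; had $c$ been permitted to vanish, one would instead need a Poincaré inequality on $H^1_0(\Omega)$, and the coercivity constant would become boundary-condition and domain dependent.
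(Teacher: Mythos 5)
Your proof is correct and follows essentially the same route as the paper's: Cauchy--Schwarz with the bounds on $A$ and $c$ for boundedness, and the lower bounds $a_{\min}$, $c_0$ combined via $\alpha=\min\{a_{\min},c_0\}$ for coercivity, with no Poincar\'e inequality needed. Your version simply spells out the pointwise generalized Cauchy--Schwarz step and the explicit constant $M=\max\{a_{\max},\norm{c}_{L^\infty(\Omega)}\}$ that the paper leaves implicit.
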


\begin{proof}
Cauchy--Schwarz and the bounds on $A$ and $c$ yield boundedness. Coercivity follows from uniform ellipticity and $c\ge c_0$:
\[
a(u,u)\ge a_{\min}\norm{\nabla u}_{L^2}^2+c_0\norm{u}_{L^2}^2
\ge \min\{a_{\min},c_0\}\,\norm{u}_{H^1(\Omega)}^2.
\]
\end{proof}

\begin{theorem}[Lax--Milgram]\label{thm:laxmilgram}
Let $V$ be a Hilbert space and $a(\cdot,\cdot)$ a bounded coercive bilinear form on $V$.
Then for every $f\in V^\ast$ there exists a unique $u\in V$ such that
\[
a(u,v)=f(v)\qquad\forall v\in V,
\]
and $\norm{u}_V\le \alpha^{-1}\norm{f}_{V^\ast}$.
\end{theorem}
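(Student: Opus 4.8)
The plan is to reduce the variational equation to an operator equation in $V$ via the Riesz representation theorem, and then to establish existence by a contraction-mapping argument. First I would fix $u\in V$ and observe that $v\mapsto a(u,v)$ is a bounded linear functional on $V$; boundedness is exactly the estimate $|a(u,v)|\le M\norm{u}_V\norm{v}_V$ assumed in the statement. The Riesz representation theorem then supplies a unique element, call it $Au\in V$, with $a(u,v)=\ip{Au}{v}$ for all $v\in V$. Bilinearity of $a$ makes $u\mapsto Au$ linear, and the bound $\norm{Au}_V\le M\norm{u}_V$ shows that $A$ is bounded. Applying Riesz once more to $f\in V^\ast$ yields a unique $w\in V$ with $f(v)=\ip{w}{v}$ for all $v$ and $\norm{w}_V=\norm{f}_{V^\ast}$. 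The original problem is thereby equivalent to solving the single operator equation $Au=w$ in $V$.

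Next I would extract the a priori bound and uniqueness directly from coercivity, before addressing existence. If $Au=w$, then
\[
\alpha\norm{u}_V^2\le a(u,u)=\ip{Au}{u}=\ip{w}{u}\le\norm{w}_V\norm{u}_V,
\]
so $\norm{u}_V\le\alpha^{-1}\norm{w}_V=\alpha^{-1}\norm{f}_{V^\ast}$, which is precisely the claimed estimate. Taking $f=0$ (equivalently $w=0$) forces $u=0$, so the solution is unique whenever it exists; everything then reduces to proving that $A$ is surjective.

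For existence I would recast $Au=w$ as a fixed-point problem. For a parameter $\rho>0$, define $T_\rho v:=v-\rho(Av-w)$, whose fixed points are exactly the solutions of $Au=w$. Writing $z=v_1-v_2$ and expanding the square of the difference,
\[
\norm{T_\rho v_1-T_\rho v_2}_V^2=\norm{z}_V^2-2\rho\,\ip{Az}{z}+\rho^2\norm{Az}_V^2\le\bigl(1-2\rho\alpha+\rho^2 M^2\bigr)\norm{z}_V^2,
\]
using coercivity $\ip{Az}{z}=a(z,z)\ge\alpha\norm{z}_V^2$ and boundedness $\norm{Az}_V\le M\norm{z}_V$. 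The quadratic $1-2\rho\alpha+\rho^2 M^2$ is strictly less than $1$ for every $\rho\in(0,2\alpha/M^2)$, so fixing any such $\rho$ makes $T_\rho$ a contraction on the complete space $V$. The Banach fixed-point theorem then delivers a unique $u$ with $T_\rho u=u$, that is $Au=w$, completing the existence proof.

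The only genuinely delicate point is surjectivity of $A$: coercivity alone gives injectivity and a closed range, but one still needs to rule out a nontrivial orthogonal complement of the range. The contraction estimate above is the clean way to close this gap, and its heart is simply the choice of $\rho$ in the admissible window $(0,2\alpha/M^2)$, so that the second-order term $\rho^2 M^2$ does not overwhelm the linear gain $2\rho\alpha$ coming from coercivity. Everything else — the two Riesz identifications and the a priori bound — is routine.
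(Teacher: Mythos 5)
Your proof is correct, and it takes a genuinely different route from the paper. You follow the classical Riesz--plus--contraction argument: identify $a(u,\cdot)$ and $f$ with elements of $V$ via the Riesz representation theorem, reduce to the operator equation $Au=w$, get the a priori bound and uniqueness from coercivity, and obtain surjectivity by showing $T_\rho v=v-\rho(Av-w)$ is a contraction for $\rho\in(0,2\alpha/M^2)$. The paper instead minimizes the energy functional $J(w)=\tfrac12 a(w,w)-f(w)$, invoking strict convexity and coercivity of $J$ to produce a unique minimizer whose Euler--Lagrange condition is the variational equation. The key difference is scope: the minimization proof tacitly requires $a$ to be symmetric (otherwise the first variation of $J$ yields $\tfrac12\bigl(a(u,v)+a(v,u)\bigr)=f(v)$, not $a(u,v)=f(v)$), which is harmless for the paper's applications since the form in Definition~\ref{def:energy} is symmetric, but it does not prove the theorem as literally stated, where no symmetry is assumed. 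Your argument covers the general non-symmetric case and is the full-strength version of the result; what the paper's version buys in exchange is brevity and a direct link to the ``energy functional'' narrative that organizes the rest of the article. One small stylistic point: your final paragraph frames the contraction step as closing a gap in a ``closed range plus orthogonal complement'' argument you never actually run; since the fixed-point argument delivers existence outright, you could simply drop that framing.
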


\begin{proof}
Define $J:V\to\R$ by $J(w):=\frac12\,a(w,w)-f(w)$.
Coercivity implies $J$ is strictly convex and coercive, hence admits a unique minimizer $u$.
The Euler--Lagrange condition $J'(u)[v]=0$ for all $v\in V$ yields $a(u,v)=f(v)$.
Taking $v=u$ gives the stability bound.
\end{proof}

\subsection{From bilinear form to operator and Green operator}

\begin{proposition}[Operator induced by a coercive form]\label{prop:form-operator}
Let $a(\cdot,\cdot)$ be bounded and coercive on a Hilbert space $V$.
Define $\mathcal{L}:V\to V^\ast$ by
\[
(\mathcal{L}u)(v) := a(u,v),\qquad u,v\in V.
\]
Then $\mathcal{L}$ is a bounded isomorphism $V\to V^\ast$. Moreover, for each $f\in V^\ast$,
the unique $u\in V$ solving $a(u,\cdot)=f(\cdot)$ satisfies $\mathcal{L}u=f$.
\end{proposition}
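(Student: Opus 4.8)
The plan is to verify in turn that $\mathcal{L}$ is well-defined and bounded, that it is injective with a bounded inverse on its range, and that it is surjective, the last step invoking Theorem~\ref{thm:laxmilgram} directly. First I would fix $u\in V$ and observe that the boundedness hypothesis on $a$, which furnishes a constant $M>0$ with $|a(u,v)|\le M\norm{u}_V\norm{v}_V$, makes the map $v\mapsto a(u,v)$ a continuous linear functional on $V$; hence $\mathcal{L}u\in V^\ast$ is well-defined, with $\norm{\mathcal{L}u}_{V^\ast}\le M\norm{u}_V$. Linearity of $\mathcal{L}$ in the argument $u$ is inherited from bilinearity of $a$, so $\mathcal{L}$ is a bounded linear map $V\to V^\ast$ with operator norm at most $M$.

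Next I would extract a lower bound from coercivity. Evaluating the functional $\mathcal{L}u$ at $u$ itself gives $(\mathcal{L}u)(u)=a(u,u)\ge\alpha\norm{u}_V^2$, while by definition of the dual norm $(\mathcal{L}u)(u)\le\norm{\mathcal{L}u}_{V^\ast}\norm{u}_V$. Combining these yields $\alpha\norm{u}_V\le\norm{\mathcal{L}u}_{V^\ast}$ for every $u\in V$. This single inequality does double duty: it shows that $\mathcal{L}$ has trivial kernel, hence is injective, and that $\mathcal{L}$ is bounded below, so the inverse defined on its range is continuous with norm at most $\alpha^{-1}$.

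For surjectivity I would simply appeal to Lax--Milgram: given any $f\in V^\ast$, Theorem~\ref{thm:laxmilgram} produces a unique $u\in V$ with $a(u,v)=f(v)$ for all $v\in V$, which is exactly the statement $\mathcal{L}u=f$. Thus $\mathcal{L}$ is onto, and together with injectivity it is a bijection; being bounded with a bounded inverse it is a bounded isomorphism $V\to V^\ast$. The concluding ``moreover'' clause is then immediate and essentially notational, since the equation $a(u,\cdot)=f(\cdot)$ means precisely that $(\mathcal{L}u)(v)=f(v)$ for every $v\in V$, i.e.\ $\mathcal{L}u=f$.

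As for the main obstacle, there is little genuine difficulty here: the content is a repackaging of Lax--Milgram, and the only point that normally requires care, namely continuity of the inverse, is handed to us for free by the coercivity lower bound $\alpha\norm{u}_V\le\norm{\mathcal{L}u}_{V^\ast}$ rather than needing the open mapping theorem. I would note in passing that surjectivity admits an intrinsic proof via the Riesz representation theorem together with a contraction argument (the classical route to Lax--Milgram), but since the theorem is already established above, citing it is the cleanest path.
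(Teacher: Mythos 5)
Your proof is correct and complete. The paper states Proposition~\ref{prop:form-operator} without proof, so there is nothing to compare against; your argument is the standard one, and it supplies exactly the three ingredients needed: boundedness of $\mathcal{L}$ from the continuity constant $M$, the lower bound $\alpha\norm{u}_V\le\norm{\mathcal{L}u}_{V^\ast}$ from coercivity (giving injectivity and continuity of the inverse, consistent with the stability estimate $\norm{u}_V\le\alpha^{-1}\norm{f}_{V^\ast}$ in Theorem~\ref{thm:laxmilgram}), and surjectivity from Lax--Milgram. One small caveat worth noting: the paper's own proof of Theorem~\ref{thm:laxmilgram} proceeds by minimizing $J(w)=\tfrac12 a(w,w)-f(w)$, which implicitly uses symmetry of $a$; if you intend the proposition for general (possibly non-symmetric) bounded coercive forms, the surjectivity step should rest on the Riesz-representation/contraction version of Lax--Milgram that you mention in passing, rather than on the variational proof given in the paper.
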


\begin{definition}[Green operator (variational inverse)]\label{def:green}
Under Theorem~\ref{thm:laxmilgram}, define $G:V^\ast\to V$ by $Gf=u$, where $u$ is the unique solution
$a(u,\cdot)=f(\cdot)$. Equivalently, $G=\mathcal{L}^{-1}$ in the sense of Proposition~\ref{prop:form-operator}.
We call $G$ the \emph{Green operator} associated with $(V,a)$.
\end{definition}

\begin{proposition}[Symmetry/positivity]\label{prop:greensym}
If $a$ is symmetric, then for all $f,g\in V^\ast$,
\[
f(Gg)=g(Gf),\qquad f(Gf)=a(Gf,Gf)\ge 0.
\]
\end{proposition}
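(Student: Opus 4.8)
The plan is to read off both identities directly from the defining variational property of the Green operator, namely that $u = Gf$ is characterized by $a(u,v) = f(v)$ for all $v \in V$ (Definition~\ref{def:green}). The only ingredients needed are this characterization, the assumed symmetry of $a$, and the coercivity established in Lemma~\ref{lem:coercive}; no additional structure is required.

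For the symmetry identity $f(Gg) = g(Gf)$, I would set $u := Gf$ and $w := Gg$, so that $a(u,v) = f(v)$ and $a(w,v) = g(v)$ hold for every $v \in V$. Specializing the first relation to the admissible test function $v = w$ gives $f(Gg) = f(w) = a(u,w)$, while specializing the second to $v = u$ gives $g(Gf) = g(u) = a(w,u)$. Since $a$ is symmetric, $a(u,w) = a(w,u)$, and the two expressions coincide. This is the whole argument: the symmetry of the quadratic form on $V$ transfers to a symmetry of the induced bilinear form $(f,g) \mapsto f(Gg)$ on the dual $V^\ast$.

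For the positivity identity I would again put $u := Gf$ and now specialize $a(u,v) = f(v)$ to $v = u$, yielding $f(Gf) = f(u) = a(u,u) = a(Gf,Gf)$. Coercivity from Lemma~\ref{lem:coercive} then gives $a(u,u) \ge \alpha \norm{u}_{H^1(\Omega)}^2 \ge 0$, which closes the claim.

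As for obstacles, there is essentially none of a technical nature, since $G$ is well defined and bounded by Lax--Milgram (Theorem~\ref{thm:laxmilgram}), so the substitutions are legitimate. The only point requiring care is bookkeeping with the duality pairing: one must keep in mind that $f(Gf)$ denotes the action of the functional $f \in V^\ast$ on the element $Gf \in V$, and that the test function in the defining equation ranges over all of $V$, so that the choices $v = Gf$ and $v = Gg$ are admissible. Once that is kept straight, both identities are immediate.
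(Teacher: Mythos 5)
Your proof is correct and is exactly the standard argument; the paper itself states Proposition~\ref{prop:greensym} without proof, and your derivation (test $a(Gf,\cdot)=f(\cdot)$ against $Gg$ and vice versa, then use symmetry of $a$; take $v=Gf$ for the positivity identity) is the intended one. The only cosmetic remark is that the proposition lives in the abstract Hilbert-space setting of Definition~\ref{def:green}, so the coercivity bound should be written with $\norm{\cdot}_V$ rather than the concrete $\norm{\cdot}_{H^1(\Omega)}$ norm of Lemma~\ref{lem:coercive}; the inequality $a(Gf,Gf)\ge 0$ holds verbatim either way.
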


\subsection{Discrete Gaussian fields and precision--covariance duality}

\begin{definition}[Gaussian vector with precision]\label{def:prec}
Let $Q\in\R^{n\times n}$ be symmetric positive definite.
A random vector $Z\in\R^n$ has \emph{precision} $Q$ if
\[
p(z)=\frac{\det(Q)^{1/2}}{(2\pi)^{n/2}}\exp\!\left(-\frac12 z^\top Q z\right),
\]
i.e.\ $Z\sim\mathcal{N}(0,Q^{-1})$.
\end{definition}

\begin{theorem}[Precision--covariance identity]\label{thm:preccov}
If $Z\sim\mathcal{N}(0,Q^{-1})$, then $\mathbb{E}[Z]=0$ and $\Cov(Z)=Q^{-1}$.
\end{theorem}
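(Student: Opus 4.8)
The plan is to reduce the claim to the one-dimensional Gaussian case by diagonalizing the precision matrix. Since $Q$ is symmetric positive definite, the spectral theorem furnishes an orthogonal $U$ and a diagonal $\Lambda=\diag(\lambda_1,\dots,\lambda_n)$ with all $\lambda_i>0$ such that $Q=U\Lambda U^\top$. First I would introduce the rotated variable $Y:=U^\top Z$. Because $U$ is orthogonal the change of variables $z=Uy$ has unit Jacobian and $z^\top Q z=y^\top\Lambda y=\sum_i \lambda_i y_i^2$, so the density of $Z$ transforms into a product $\prod_i (\lambda_i/2\pi)^{1/2}\exp(-\tfrac12\lambda_i y_i^2)$ of independent scalar Gaussian densities. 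The normalizing constant $\det(Q)^{1/2}/(2\pi)^{n/2}$ of Definition~\ref{def:prec} factors correctly here, using $\det(Q)=\prod_i\lambda_i$; confirming this factorization is the one genuinely computational step and certifies that $p$ is a bona fide density.

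For the mean, I would note that each factor $\exp(-\tfrac12\lambda_i y_i^2)$ is even in $y_i$, so the integrand $y_i\,p(y)$ is odd and absolutely integrable by Gaussian decay; hence $\mathbb{E}[Y_i]=0$ for every $i$, giving $\mathbb{E}[Y]=0$ and therefore $\mathbb{E}[Z]=U\,\mathbb{E}[Y]=0$. Equivalently, $p(z)$ is even in $z$, so the mean vanishes by symmetry directly, without rotating.

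For the covariance, independence of the coordinates $Y_i$ makes $\mathbb{E}[YY^\top]$ diagonal, and each diagonal entry is the variance of a scalar $\mathcal{N}(0,\lambda_i^{-1})$ variable, namely $\lambda_i^{-1}$, by the standard one-dimensional second-moment integral. Thus $\Cov(Y)=\mathbb{E}[YY^\top]=\Lambda^{-1}$. Rotating back, $\Cov(Z)=\mathbb{E}[ZZ^\top]=U\,\mathbb{E}[YY^\top]\,U^\top=U\Lambda^{-1}U^\top=(U\Lambda U^\top)^{-1}=Q^{-1}$, which is the claim.

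No step presents a real obstacle, since the result is classical; the only points requiring care are the bookkeeping of the normalizing constant under diagonalization and the (routine) justification that the first and second moments are finite, both handled by Gaussian integrability. An equally short alternative avoids diagonalization entirely: compute the moment-generating function $M(t)=\mathbb{E}[e^{t^\top Z}]$ by completing the square, $-\tfrac12 z^\top Q z+t^\top z=-\tfrac12(z-Q^{-1}t)^\top Q(z-Q^{-1}t)+\tfrac12 t^\top Q^{-1}t$, so that $M(t)=\exp(\tfrac12 t^\top Q^{-1}t)$; then $\mathbb{E}[Z]=\nabla M(0)=0$ and $\mathbb{E}[ZZ^\top]=\nabla^2 M(0)=Q^{-1}$, the interchange of differentiation and integration being licensed by dominated convergence. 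I would present the diagonalization route as the main line because it makes the precision--covariance inversion $Q\mapsto Q^{-1}$ transparent eigenvalue-by-eigenvalue.
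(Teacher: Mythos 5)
Your proof is correct. Note that the paper states Theorem~\ref{thm:preccov} without any proof, treating it as a classical fact about Gaussian vectors, so there is no in-paper argument to compare against. Your diagonalization route ($Q=U\Lambda U^\top$, rotation to independent scalar coordinates, mean zero by symmetry, $\Cov(Y)=\Lambda^{-1}$ coordinatewise, rotate back) is a complete and standard derivation, and the moment-generating-function alternative you sketch is equally valid; either would serve as a legitimate proof had the authors chosen to include one.
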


\subsection{Variograms as derived quadratic functionals}

\begin{definition}[Variogram]\label{def:variogram}
Let $Z$ be a second-order field on an index set $T$ with covariance $C(s,t)=\mathbb{E}[Z(s)Z(t)]$.
The (semi-)variogram is
\[
\gamma(s,t)=\frac12\,\mathbb{E}\big[(Z(s)-Z(t))^2\big],
\]
whenever finite.
\end{definition}

\begin{proposition}[Variogram--covariance identity]\label{prop:varcov}
If $Z$ is centered and second order, then
\[
\gamma(s,t)=\frac12\big(C(s,s)+C(t,t)-2C(s,t)\big).
\]
If moreover $C(s,s)$ is constant, then $\gamma(s,t)=C(s,s)-C(s,t)$.
\end{proposition}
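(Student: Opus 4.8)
The plan is to reduce the claim to a direct expansion of the squared increment followed by linearity of the expectation. The second-order hypothesis guarantees that $Z(s)$ and $Z(t)$ each have finite second moments, so every term produced by expanding the square is integrable; this is the only analytic point to verify, and it is immediate, so the expectation may be distributed across the sum without comment.

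First I would write the pointwise algebraic identity, valid for each realization,
\[
(Z(s)-Z(t))^2 = Z(s)^2 - 2\,Z(s)Z(t) + Z(t)^2 .
\]
Taking expectations and using linearity gives
\[
\mathbb{E}\big[(Z(s)-Z(t))^2\big] = \mathbb{E}[Z(s)^2] - 2\,\mathbb{E}[Z(s)Z(t)] + \mathbb{E}[Z(t)^2].
\]
Since $Z$ is centered, the quantity $C$ of Definition~\ref{def:variogram} is genuinely the covariance, and each term on the right is the corresponding value of $C$: namely $\mathbb{E}[Z(s)^2]=C(s,s)$, $\mathbb{E}[Z(t)^2]=C(t,t)$, and $\mathbb{E}[Z(s)Z(t)]=C(s,t)$. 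Substituting and multiplying by $\tfrac12$ yields the first display of the statement.

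For the second assertion I would specialize to the case where the variance is constant in the location, so $C(s,s)=C(t,t)$. The two diagonal terms then coincide, giving $C(s,s)+C(t,t)=2\,C(s,s)$, and the formula collapses to $\gamma(s,t)=C(s,s)-C(s,t)$. No genuine obstacle arises: the whole argument is a single computation. The only point worth flagging is the role of centeredness, which is what lets one read $\mathbb{E}[Z(s)Z(t)]$ as the covariance $C(s,t)$; the increment expansion itself holds for any second-order field, but without centering the formula would relate the variogram to raw cross-moments rather than to covariances as intended.
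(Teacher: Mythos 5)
Your proof is correct and is exactly the standard argument: expand the squared increment, take expectations, and use centeredness to identify the second moments with values of $C$. The paper states this proposition without proof (it is treated as elementary), and your expansion is precisely the computation it implicitly relies on, including the correct observation that centering is what lets $\mathbb{E}[Z(s)Z(t)]$ be read as $C(s,t)$.
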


\subsection{Operator inverse vs.\ integral kernel (U3)}\label{sec:kernel-vs-operator}

A recurring ambiguity is ``$C(x,x')$ exists'' versus ``$G=\mathcal{L}^{-1}$ exists''. Our convention is:
\begin{itemize}[leftmargin=2em]
\item \textbf{Primary object:} the Green operator $G:V^\ast\to V$ defined variationally (Definition~\ref{def:green}).
This exists under coercivity and does \emph{not} require any pointwise kernel.
\item \textbf{Kernel representation (secondary):} when additional regularity holds, $G$ admits an integral kernel $G(x,x')$
(or a distributional kernel) such that $(Gf)(x)=\int_\Omega G(x,x')f(x')\,dx'$ for suitable $f$.
This depends on dimension, coefficients, and boundary regularity; see e.g.\ \citep{EvansPDE2010,GilbargTrudinger1983}.
\end{itemize}

\noindent\textbf{Terminology convention.}
We reserve \emph{Green operator} for the variational inverse $G=\mathcal{L}^{-1}:V^\ast\to V$.
When an integral representation exists we write \emph{Green's function} or \emph{Green kernel} for a function (or distribution)
$G(\cdot,\cdot)$ representing $G$ by $(Gf)(x)=\int_\Omega G(x,y)f(y)\,dy$ in the appropriate sense.

\begin{remark}[When we write $G(x,x')$]
Whenever we write $G(x,x')$ (or $C(x,x')$), this should be read as:
(i) an honest function kernel when regularity permits,
(ii) otherwise a \emph{distributional kernel} defined by duality.
All kriging/conditioning identities remain exact at the discrete level and meaningful at the operator level without requiring pointwise kernels.
\end{remark}

\begin{remark}[Variograms without diagonal kernels]
In intrinsic settings (e.g.\ massless fields in low dimension), $C(x,x)$ may fail pointwise.
Variograms remain meaningful through increments (Definition~\ref{def:variogram}) and/or discrete approximations
\citep{Matheron1971,Cardy1996}.
\end{remark}

\section{Polyakov-Type Actions as Variogram Generators}\label{sec:polyakov}

Let $(\Sigma,h)$ be a 2D Riemannian manifold (possibly with boundary) with Laplace--Beltrami operator
\[
\Delta_h Z=\frac{1}{\sqrt{h}}\partial_a\!\left(\sqrt{h}\,h^{ab}\partial_b Z\right).
\]
(For the Polyakov/free-field viewpoint in physics, see \citep{Polyakov1987,Polchinski1998,DiFrancescoMathieuSenechal1997}.)

\begin{definition}[Polyakov-type quadratic action]\label{def:polyakov}
Let $\alpha'>0$ and $m\ge 0$.
Define
\[
S[Z]
=\frac{1}{4\pi\alpha'}\int_\Sigma \sqrt{h}\,h^{ab}\partial_a Z\,\partial_b Z\,d^2\sigma
+\frac{m^2}{2}\int_\Sigma \sqrt{h}\,Z^2\,d^2\sigma,
\]
with boundary conditions encoded in the energy space $V$ (Dirichlet/Neumann/Robin).
\end{definition}

\begin{proposition}[Associated precision operator]\label{prop:polyop}
Under boundary conditions compatible with integration by parts,
\[
S[Z]=\frac12\int_\Sigma \sqrt{h}\; Z\,(\mathcal{L}Z)\,d^2\sigma,
\qquad
\mathcal{L}=-\frac{1}{2\pi\alpha'}\Delta_h+m^2.
\]
\end{proposition}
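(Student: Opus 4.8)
The plan is to obtain the identity by a single integration by parts (Green's first identity on $(\Sigma,h)$) applied to the gradient term of $S[Z]$ in Definition~\ref{def:polyakov}; the mass term is already in the desired symmetric form, so no work is needed there. The strategy is to rewrite the integrand $\sqrt{h}\,h^{ab}\partial_a Z\,\partial_b Z$ as a total coordinate divergence minus a term proportional to $Z\,\Delta_h Z$, apply the divergence theorem, and discard the resulting boundary flux under the stated boundary conditions.

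Concretely, I would first set $W^a := \sqrt{h}\,h^{ab}\partial_b Z$ and note that the defining formula for the Laplace--Beltrami operator at the start of this section gives $\partial_a W^a = \sqrt{h}\,\Delta_h Z$. The Leibniz rule then yields the pointwise identity
\[
\sqrt{h}\,h^{ab}\partial_a Z\,\partial_b Z = \partial_a\!\left(Z\,W^a\right) - \sqrt{h}\,Z\,\Delta_h Z.
\]
Integrating over $\Sigma$ in the coordinate measure $d^2\sigma$ and applying the divergence theorem to the first term on the right converts it into the boundary integral $\oint_{\partial\Sigma} Z\,\sqrt{h}\,h^{ab}(\partial_b Z)\,n_a\,d\ell$, i.e.\ $Z$ times the conormal flux of $Z$ across $\partial\Sigma$.

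The only hypothesis-dependent step is then to argue that this boundary term vanishes. For the Dirichlet choice of $V$ the trace $Z|_{\partial\Sigma}=0$ kills it; for homogeneous Neumann conditions the conormal derivative $h^{ab}(\partial_b Z)\,n_a$ vanishes; Robin conditions relocate it into a symmetric surface contribution, which is the precise content of ``boundary conditions compatible with integration by parts.'' With the flux gone I obtain $\frac{1}{4\pi\alpha'}\int_\Sigma \sqrt{h}\,h^{ab}\partial_a Z\,\partial_b Z\,d^2\sigma = -\frac{1}{4\pi\alpha'}\int_\Sigma \sqrt{h}\,Z\,\Delta_h Z\,d^2\sigma$. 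Adding the mass term, written as $\frac{m^2}{2}\int_\Sigma \sqrt{h}\,Z^2\,d^2\sigma = \frac12\int_\Sigma \sqrt{h}\,Z\,(m^2 Z)\,d^2\sigma$, and factoring out $\tfrac12$ collects both contributions into $\frac12\int_\Sigma \sqrt{h}\,Z\,(\mathcal{L}Z)\,d^2\sigma$ with $\mathcal{L}=-\frac{1}{2\pi\alpha'}\Delta_h + m^2$, as claimed.

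I anticipate no serious obstacle: this is a routine Green's identity. The one point demanding care is bookkeeping around the density $\sqrt{h}$ — one must apply the divergence theorem to the coordinate divergence $\partial_a(Z\,W^a)$, where $\sqrt{h}$ is already absorbed into $W^a$, rather than to a Riemannian divergence, so that no spurious metric-derivative terms appear. Sufficient regularity is assumed implicitly ($Z$ smooth enough for $\Delta_h Z$ and the trace to be meaningful); rigorously the identity should be read in the weak sense, consistent with Section~\ref{sec:framework}, where $\mathcal{L}$ is exactly the operator induced by the bilinear form $a(Z,Z)=2\,S[Z]$.
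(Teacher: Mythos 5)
Your proof is correct and is exactly the standard argument the paper leaves implicit: the proposition is stated without proof, and the intended justification is precisely this integration by parts, with the coordinate divergence $\partial_a(Z\sqrt{h}\,h^{ab}\partial_b Z)$ handled as you do so that no spurious metric-derivative terms arise. The only caveat worth flagging is that the displayed identity holds verbatim only when the boundary flux actually vanishes (Dirichlet or homogeneous Neumann); for Robin conditions the relocated surface term must be carried along as an explicit boundary contribution to $S[Z]$, which your parenthetical acknowledges but slightly understates.
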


\begin{remark}[Boundary-condition dependence is structural]
Even with identical differential expression $\mathcal{L}$, the Green operator depends on the boundary condition,
so the induced variogram does as well. Equivalently: the differential expression does not determine a unique
Green kernel without boundary/interface data.
Section~\ref{sec:worked} gives an explicit 1D calculation showing this dependence in closed form.
\end{remark}

\section{Worked Micro-Examples: Boundaries and Interfaces (U2)}\label{sec:worked}

This section provides pocket calculations illustrating:
(i) Green kernels depend on boundary conditions, and
(ii) interface surface terms yield flux-jump conditions and attenuate cross-interface covariance.

\subsection{1D interval: Dirichlet vs Neumann Green kernel and variogram}\label{sec:worked-1d}

Consider $\Omega=(0,L)$ and the 1D operator
\[
\mathcal{L}=-\frac{d^2}{dx^2}+m^2,\qquad m>0.
\]
Let $G_B(x,y)$ denote the Green kernel under boundary condition $B\in\{D,N\}$.

\begin{example}[Dirichlet Green kernel]\label{ex:dirichlet-green}
For Dirichlet $u(0)=u(L)=0$, one has
\[
G_D(x,y)=\frac{1}{m\sinh(mL)}
\begin{cases}
\sinh(mx)\,\sinh(m(L-y)), & x\le y,\\[2pt]
\sinh(my)\,\sinh(m(L-x)), & x>y.
\end{cases}
\]
The induced variogram is $\gamma_D(x,y)=\tfrac12(G_D(x,x)+G_D(y,y)-2G_D(x,y))$ when interpreted pointwise.
\end{example}

\begin{example}[Neumann Green kernel]\label{ex:neumann-green}
For Neumann $u'(0)=u'(L)=0$, the Green kernel is
\[
G_N(x,y)=\frac{1}{m\sinh(mL)}
\begin{cases}
\cosh(mx)\,\cosh(m(L-y)), & x\le y,\\[2pt]
\cosh(my)\,\cosh(m(L-x)), & x>y,
\end{cases}
\]
and $\gamma_N(x,y)=\tfrac12(G_N(x,x)+G_N(y,y)-2G_N(x,y))$.
\end{example}

\begin{remark}[Same operator, different variograms]
$G_D\neq G_N$ even though the differential operator is identical; only the boundary condition changes.
Therefore the induced variogram is boundary-condition dependent.
\end{remark}

\paragraph{Implication for empirical variography on bounded domains (U4).}
Examples~\ref{ex:dirichlet-green}--\ref{ex:neumann-green} and the interface model in Section~\ref{sec:worked-interface}
show that in an operator-first formulation the covariance $C=\mathcal{L}^{-1}$ and variogram $\gamma$ are determined by the
\emph{boundary-value problem}, not the differential expression alone: the domain $\Omega$, boundary condition $B$ on $\partial\Omega$,
and internal defects/transmission terms enter the Green operator. Consequently, experimental variograms computed from data supported
in a bounded region need not match the infinite-domain template associated with the same symbol (e.g.\ Mat\'ern/Whittle), even when
interior coefficients are constant: boundary screening/reflection and interface transmission modify long-lag behaviour and can shift
apparent sill and range.

A practical recommendation is to treat $(\mathcal{L},B,S)$ as the calibrated object: estimate parameters against the domain-dependent
Green operator (e.g.\ via SPDE/GMRF likelihood), and stratify empirical variogram diagnostics by pair-classes (interior vs near-boundary;
same-side vs cross-interface). A simple diagnostic is to compute empirical variograms separately for points within a buffer of $\partial\Omega$
versus deep interior; competing boundary conditions (Dirichlet/Neumann/Robin, or mixed) can be compared by likelihood or predictive score
within the same SPDE discretisation.

\subsection{Two-subdomain interface: surface term and derivative jump}\label{sec:worked-interface}

Let $\Omega=(-L,L)$ with an interface at $S=\{0\}$.
Consider the energy (1D for simplicity)
\[
\mathcal{E}[Z]
=\frac12\int_{-L}^0 \big(|Z'|^2+m^2 Z^2\big)\,dx
+\frac12\int_0^{L} \big(|Z'|^2+m^2 Z^2\big)\,dx
+\frac{\alpha}{2}\,Z(0)^2,
\qquad \alpha\ge 0,
\]
with (say) Dirichlet at $\pm L$ to avoid endpoint terms.

\noindent\emph{Trace remark.} In one dimension the point trace $Z(0)$ is well-defined for $Z\in H^1(-L,L)$; in higher dimensions replace point evaluation by a surface trace on an interface hypersurface $S$ (cf.\ Proposition~\ref{prop:jump}).

\begin{proposition}[Interface (jump) condition]\label{prop:1d-jump}
Stationarity of $\mathcal{E}$ implies
\[
(-Z''+m^2 Z)=0\quad \text{on }(-L,0)\cup(0,L),
\]
continuity of $Z$ at $0$ (for $Z\in H^1$), and the derivative jump condition
\[
Z'(0^+)-Z'(0^-)=\alpha\,Z(0).
\]
\end{proposition}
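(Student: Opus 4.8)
The plan is to treat $\mathcal{E}$ as a functional on the energy space $V=H^1_0(-L,L)$ and to extract the stated conditions from its first variation. I would first record a structural fact that does the work for the continuity claim for free: in one dimension $V\hookrightarrow C[-L,L]$, so every point value $Z(0)$ and $v(0)$ is well defined and $Z$ is (absolutely) continuous across $0$; moreover the admissible increments $v\in V$ feel the interface only through the single scalar $v(0)$, not through two independent one-sided values. With this in hand I would compute the Gâteaux derivative $\frac{d}{dt}\mathcal{E}[Z+tv]\big|_{t=0}$ for arbitrary $v\in V$; by the quadratic structure this is
\[
D\mathcal{E}[Z](v)=\int_{-L}^{0}\!\big(Z'v'+m^2 Zv\big)\,dx+\int_{0}^{L}\!\big(Z'v'+m^2 Zv\big)\,dx+\alpha\,Z(0)\,v(0).
\]

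The next step is to integrate by parts \emph{separately} on $(-L,0)$ and on $(0,L)$, which is the crucial modelling point: $Z$ is not assumed twice differentiable across the interface, so one cannot integrate by parts through $x=0$. Using the Dirichlet conditions $v(\pm L)=0$ to discard the endpoint contributions, the two boundary terms collapse onto the interface and combine with the penalty term to give
\[
D\mathcal{E}[Z](v)=\int_{-L}^{0}(-Z''+m^2 Z)\,v\,dx+\int_{0}^{L}(-Z''+m^2 Z)\,v\,dx+\big[Z'(0^-)-Z'(0^+)+\alpha Z(0)\big]v(0).
\]
To read off the conditions I would then test against two families of $v$. Taking $v\in H^1_0$ supported in one open subinterval with $v(0)=0$ annihilates the interface bracket and the opposite bulk integral, so the fundamental lemma of the calculus of variations forces $-Z''+m^2 Z=0$ on each of $(-L,0)$ and $(0,L)$. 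With the bulk terms then vanishing for \emph{every} $v$, stationarity $D\mathcal{E}[Z](v)=0$ reduces to $[Z'(0^-)-Z'(0^+)+\alpha Z(0)]\,v(0)=0$ for all $v(0)\in\R$, which yields the natural flux-jump condition $Z'(0^+)-Z'(0^-)=\alpha Z(0)$.

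The step I expect to require the most care is making the one-sided derivatives $Z'(0^\pm)$ genuinely meaningful, since a priori $Z$ is only $H^1$ and the integrations by parts above are formal. The clean resolution is a posteriori regularity: once the bulk equation holds on each open subinterval, $Z$ is smooth there (indeed a combination of $e^{\pm m x}$), so the limits $Z'(0^\pm)$ exist classically and the jump identity is an honest pointwise statement rather than a merely distributional one. Equivalently, one can recognise $\alpha\,Z(0)\,\delta_0$ as the distributional source produced by the penalty, so that $-Z''+m^2 Z=\alpha\,Z(0)\,\delta_0$ holds on all of $(-L,L)$ and the derivative jump is its standard interpretation. In higher dimensions the identical argument applies with $v(0)$ replaced by the surface trace $\trS v$ and the point penalty by the surface integral, which is precisely the content recorded in Proposition~\ref{prop:jump}.
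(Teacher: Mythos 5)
The paper states this proposition without a proof, so there is nothing to diverge from: your first-variation argument (Gâteaux derivative, integration by parts separately on each subinterval, the fundamental lemma to kill the bulk terms, then reading off the natural condition from the coefficient of $v(0)$), together with the observation that $H^1(-L,L)\hookrightarrow C[-L,L]$ gives continuity at $0$ for free and that interior elliptic regularity makes $Z'(0^\pm)$ classical, is exactly the intended derivation and is correct, including the sign of the bracket $Z'(0^-)-Z'(0^+)+\alpha Z(0)$. One small slip in your closing aside: the distributional form of stationarity is $-Z''+m^2Z+\alpha Z(0)\,\delta_0=0$, i.e.\ $-Z''+m^2Z=-\alpha Z(0)\,\delta_0$; since the distributional $-Z''$ contributes $-\bigl(Z'(0^+)-Z'(0^-)\bigr)\delta_0$, the sign you wrote ($+\alpha Z(0)\delta_0$ on the right) would give the jump $Z'(0^+)-Z'(0^-)=-\alpha Z(0)$, contradicting your own (correct) main computation.
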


\begin{remark}[Explicit Green-kernel update (delta barrier)]
Let $G_0$ be the Dirichlet Green kernel on $(-L,L)$ for $-d^2/dx^2+m^2$ without the interface penalty ($\alpha=0$).
Then
\[
G_\alpha(x,y)
=
G_0(x,y)\;-\;\frac{\alpha\,G_0(x,0)\,G_0(0,y)}{1+\alpha\,G_0(0,0)}.
\]
In particular $G_\alpha(x,y)\le G_0(x,y)$ pointwise, and cross-interface covariance ($x<0<y$) decreases monotonically with $\alpha$.
For the Dirichlet kernel on $(-L,L)$ one can evaluate $G_0(0,0)=\frac{\tanh(mL)}{2m}$.
\end{remark}

\begin{figure}[t]
  \centering
  \IfFileExists{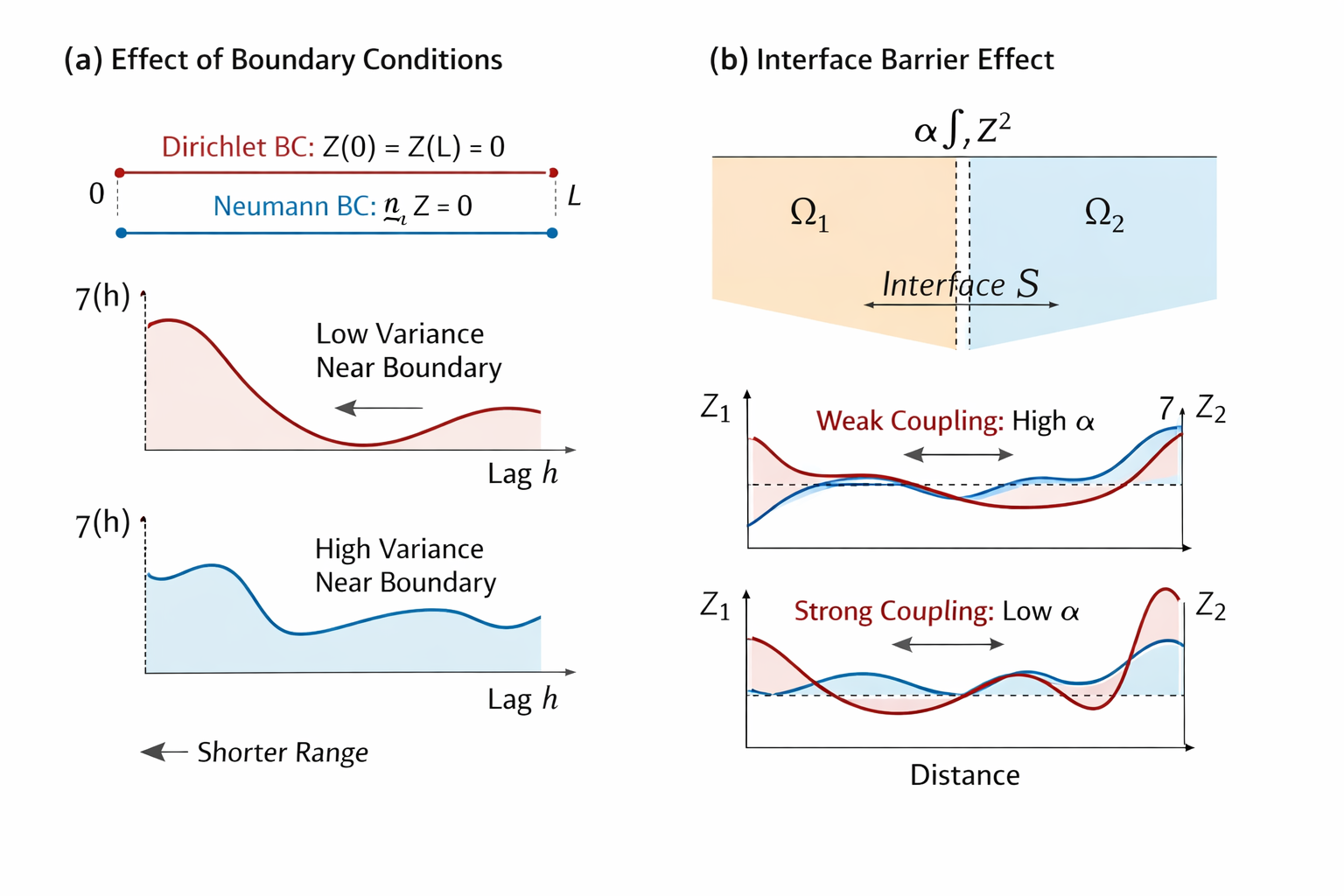}{%
    \includegraphics[width=\linewidth]{effects.png}%
  }{%
    \fbox{\parbox{0.95\linewidth}{\centering\vspace{1em}
    \textbf{Figure placeholder.} File \texttt{effects.png} not found.\vspace{1em}}}%
  }
  \caption{Schematic domain effects on spatial dependence. \textbf{Left:} the same elliptic operator on a bounded domain yields different Green kernels (hence covariances/variograms) under Dirichlet vs.\ Neumann boundary conditions, illustrating boundary-induced screening/reflection. \textbf{Right:} an internal interface $S$ with defect penalty $\frac{\alpha}{2}\int_S Z^2\,dS$ (or equivalent transmission condition) reduces cross-interface correlation as $\alpha$ increases, while preserving stronger along-interface continuity.}
  \label{fig:effects}
\end{figure}

\section{Product Manifolds and Nested Covariance Structures}\label{sec:product}

Nested/multi-range covariance models are ubiquitous in practice \citep{ChilesDelfiner2012}.
A clean operator mechanism is to place the field on a product $\Sigma\times\mathcal{K}$ and decompose into internal modes.

\begin{assumption}[Compact internal factor]\label{ass:compact}
Let $\mathcal{K}$ be a compact Riemannian manifold without boundary.
Let $\{-\Delta_{\mathcal{K}}\phi_n=\lambda_n\phi_n\}_{n\ge 0}$ be an orthonormal eigenbasis of $L^2(\mathcal{K})$.
\end{assumption}

\begin{proposition}[Mode decomposition on $\Sigma\times\mathcal{K}$]\label{prop:kk}
Let $\mathcal{M}=\Sigma\times\mathcal{K}$ and
\[
\mathcal{L}=\mathcal{L}_\Sigma\otimes I + I\otimes(-\Delta_{\mathcal{K}})+m^2,
\]
with $\mathcal{L}_\Sigma$ positive self-adjoint (boundary conditions included).
If $G_n:=(\mathcal{L}_\Sigma+(m^2+\lambda_n))^{-1}$ exists (variationally) for each $n$, then
for $f(\sigma,y)=\sum_n f_n(\sigma)\phi_n(y)$,
\[
(\mathcal{L}^{-1}f)(\sigma,y)=\sum_{n\ge 0} (G_n f_n)(\sigma)\,\phi_n(y).
\]
\end{proposition}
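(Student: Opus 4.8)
The plan is to diagonalize $\mathcal{L}$ against the spectral decomposition of the internal factor $\mathcal{K}$, thereby reducing the inversion of $\mathcal{L}$ on $\mathcal{M}$ to a uniformly coercive family of inversions on $\Sigma$, and then to reassemble the modes and certify convergence in the product energy norm. First I would record the orthogonal decomposition supplied by Assumption~\ref{ass:compact}. Since $\mathcal{K}$ is compact without boundary, $\{\phi_n\}$ is an orthonormal basis of $L^2(\mathcal{K})$, so the map $g\mapsto (g_n)_{n\ge 0}$ with $g_n(\sigma):=\int_{\mathcal{K}} g(\sigma,y)\phi_n(y)\,dV_{\mathcal{K}}(y)$ is an isometry $L^2(\mathcal{M})\cong\bigoplus_n L^2(\Sigma)$ by Fubini and Parseval, giving $\norm{g}_{L^2(\mathcal{M})}^2=\sum_n\norm{g_n}_{L^2(\Sigma)}^2$. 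In particular $f=\sum_n f_n\phi_n$ with $\sum_n\norm{f_n}^2=\norm{f}^2<\infty$.

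Next I would block-diagonalize the bilinear form. Writing $a$ for the form of $\mathcal{L}$ on the product energy space $V$ (Proposition~\ref{prop:form-operator}) and $a_\Sigma$ for the form of $\mathcal{L}_\Sigma$, the eigenequation $-\Delta_{\mathcal{K}}\phi_n=\lambda_n\phi_n$ together with self-adjointness of $-\Delta_{\mathcal{K}}$ yields both $\ip{\phi_n}{\phi_m}_{L^2(\mathcal{K})}=\delta_{nm}$ and $\int_{\mathcal{K}}\nabla_{\mathcal{K}}\phi_n\cdot\nabla_{\mathcal{K}}\phi_m\,dV_{\mathcal{K}}=\lambda_n\delta_{nm}$. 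Hence for $u=\sum_n u_n\phi_n$ all cross terms vanish and the energy splits as
\[
a(u,u)=\sum_{n\ge 0}\Big(a_\Sigma(u_n,u_n)+(m^2+\lambda_n)\norm{u_n}_{L^2(\Sigma)}^2\Big)=\sum_{n\ge 0}a_n(u_n,u_n),
\]
where $a_n$ is the form of $\mathcal{L}_\Sigma+(m^2+\lambda_n)$. Each $a_n$ is bounded and coercive uniformly in $n$: by the template of Lemma~\ref{lem:coercive} the shift $m^2+\lambda_n\ge m^2>0$ only strengthens the coercivity constant of $\mathcal{L}_\Sigma$, so Theorem~\ref{thm:laxmilgram} and Definition~\ref{def:green} give $G_n=(\mathcal{L}_\Sigma+m^2+\lambda_n)^{-1}$ with $\norm{G_n}_{L^2\to L^2}\le(m^2+\lambda_n)^{-1}\le m^{-2}$.

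Then I would invert mode by mode and reassemble. Setting $u_n:=G_n f_n$ and $u:=\sum_n u_n\phi_n$, the uniform bound gives $L^2$-convergence with $\norm{u}_{L^2(\mathcal{M})}^2=\sum_n\norm{G_n f_n}^2\le m^{-4}\norm{f}^2$. To verify the weak formulation against a product function $w\otimes\phi_m$, orthogonality collapses $a(u,w\otimes\phi_m)$ to $a_m(u_m,w)=\ip{f_m}{w}_{L^2(\Sigma)}=\ip{f}{w\otimes\phi_m}_{L^2(\mathcal{M})}$, which holds by the very definition $u_m=G_m f_m$. Extending by density of finite linear combinations of product functions in $V$ then gives $a(u,v)=\ip{f}{v}$ for all $v\in V$, so $u=\mathcal{L}^{-1}f$, which is the claimed identity.

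The hard part will be the two closures in the last step: showing that $u$ genuinely lies in $V$ (finite energy) and justifying the term-by-term evaluation of $a(u,u)$. Both rest on the orthogonal energy splitting above, which produces the telescoping identity $a(u,u)=\sum_n a_n(u_n,u_n)=\sum_n\ip{f_n}{G_n f_n}\le m^{-2}\norm{f}^2<\infty$; this simultaneously certifies $u\in V$ and the convergence of the partial sums in the energy norm, letting the weak formulation pass to the limit. The only remaining subtlety is the density of finite sums of products $w\otimes\phi_n$ in $V$, which follows from completeness of $\{\phi_n\}$ in $L^2(\mathcal{K})$ and the identification of $V$ as the form-domain completion of such tensors.
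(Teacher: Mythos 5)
The paper states Proposition~\ref{prop:kk} without an explicit proof, but your argument---expanding in the eigenbasis of $-\Delta_{\mathcal{K}}$, block-diagonalizing the energy so that $a(u,u)=\sum_n a_n(u_n,u_n)$, inverting mode by mode, and closing via energy-norm convergence of the partial sums plus density of finite tensor combinations in $V$---is precisely the separation-of-variables argument the statement is built on, and it is correct. The one point to flag is that your uniform bound $\norm{G_n}_{L^2\to L^2}\le m^{-2}$ silently assumes $m>0$; when $m=0$ (and $\lambda_0=0$ for the constant mode on the closed manifold $\mathcal{K}$) that mode must instead be controlled by the stated hypothesis that $G_0=\mathcal{L}_\Sigma^{-1}$ exists variationally, and since $\lambda_n\to\infty$ only finitely many modes ever need this separate treatment, so the reassembly and convergence steps go through unchanged.
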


\begin{remark}
Each internal mode shifts the effective mass $m_n^2=m^2+\lambda_n$, contributing a component with its own range.
If $\mathcal{L}_\Sigma=(\kappa^2-\Delta_\Sigma)^{\alpha/2}$ (Whittle--Mat\'ern form), then each mode produces
$\kappa_n^2=\kappa^2+\lambda_n$ and an effective range scale on the order of $(\kappa^2+\lambda_n)^{-1/2}$.
\end{remark}

\section{Conditioning, Kriging, and Source Representations}\label{sec:conditioning}

\subsection{Gaussian conditioning in covariance form}

\begin{theorem}[Conditional law of a joint Gaussian vector]\label{thm:condgauss}
Let $(X,Y)$ be jointly Gaussian with mean zero and block covariance
\[
\begin{pmatrix}X\\Y\end{pmatrix}\sim
\mathcal{N}\!\left(0,
\begin{pmatrix}
\Sigma_{XX} & \Sigma_{XY}\\
\Sigma_{YX} & \Sigma_{YY}
\end{pmatrix}
\right),
\]
where $\Sigma_{XX}$ is symmetric positive definite (hence invertible). Then the conditional distribution of $Y$ given $X=x$ is Gaussian:
\[
Y\,\big|\,X=x \sim \mathcal{N}\!\left(\ \Sigma_{YX}\Sigma_{XX}^{-1}x,\ \Sigma_{YY}-\Sigma_{YX}\Sigma_{XX}^{-1}\Sigma_{XY}\ \right).
\]
The conditional covariance is the (positive semidefinite) Schur complement of $\Sigma_{XX}$ in the joint covariance matrix.
\end{theorem}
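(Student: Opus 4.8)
The plan is to avoid inverting the full block covariance matrix directly and instead use the standard decorrelation (residual regression) construction, which produces the Schur complement automatically and makes its positive semidefiniteness transparent. First I would introduce the gain matrix $B:=\Sigma_{YX}\Sigma_{XX}^{-1}$, well defined because $\Sigma_{XX}$ is invertible by hypothesis, together with the residual $W:=Y-BX$. The key preliminary computation is that $W$ is uncorrelated with $X$: by bilinearity of covariance,
\[
\Cov(W,X)=\Sigma_{YX}-B\,\Sigma_{XX}=\Sigma_{YX}-\Sigma_{YX}\Sigma_{XX}^{-1}\Sigma_{XX}=0.
\]

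Next I would compute the covariance of the residual itself, expanding $\Cov(Y-BX)$ and substituting the definition of $B$; the two mixed terms and the quadratic term combine to leave
\[
\Cov(W)=\Sigma_{YY}-\Sigma_{YX}\Sigma_{XX}^{-1}\Sigma_{XY},
\]
which is precisely the Schur complement of $\Sigma_{XX}$ appearing in the statement. Because $\Cov(W)$ is a genuine covariance matrix, it is automatically symmetric positive semidefinite, so the parenthetical claim at the end of the theorem comes for free.

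The decisive structural step is then to note that $(W,X)$ is a linear image of the jointly Gaussian vector $(X,Y)$ and is therefore itself jointly Gaussian; since $W$ and $X$ are uncorrelated, they are \emph{independent}. Consequently the conditional law of $W$ given $X=x$ coincides with its unconditional law $\mathcal{N}(0,\Cov(W))$, regardless of the conditioning value. Undoing the change of variables via $Y=W+BX$ replaces $X$ by the constant $x$ while leaving the independent part $W$ untouched, whence
\[
Y\mid X=x\ \sim\ \mathcal{N}\big(Bx,\ \Cov(W)\big)=\mathcal{N}\big(\Sigma_{YX}\Sigma_{XX}^{-1}x,\ \Sigma_{YY}-\Sigma_{YX}\Sigma_{XX}^{-1}\Sigma_{XY}\big),
\]
which is exactly the asserted conditional distribution.

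An equivalent route completes the square in the joint Gaussian density (Definition~\ref{def:prec}) using the block-matrix inversion lemma, where the Schur complement enters as the inverse of the $(Y,Y)$ block of the joint precision; I would keep the residual argument as the primary proof because it is coordinate-light and yields positive semidefiniteness without extra work. I expect no genuine obstacle, only one genuinely load-bearing fact: the implication \emph{uncorrelated $\Rightarrow$ independent}, which is false for general second-order vectors and holds here solely because $(W,X)$ is Gaussian — so it must be invoked explicitly rather than treated as a covariance identity.
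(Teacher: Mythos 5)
Your proposal is correct and complete. Note that the paper states Theorem~\ref{thm:condgauss} without any proof (it is invoked as a classical Gaussian conditioning identity), so there is no in-paper argument to compare against; your residual-decorrelation route --- setting $W=Y-\Sigma_{YX}\Sigma_{XX}^{-1}X$, checking $\Cov(W,X)=0$, and invoking joint Gaussianity to upgrade uncorrelatedness to independence --- is the standard proof and delivers both the conditional law and the positive semidefiniteness of the Schur complement (as a bona fide covariance) in one pass. You correctly single out the one load-bearing step, namely that ``uncorrelated $\Rightarrow$ independent'' holds only because $(W,X)$, being a linear image of the jointly Gaussian $(X,Y)$, is itself jointly Gaussian; the algebra for $\Cov(W)$ checks out, and the alternative complete-the-square argument you mention would work equally well but, as you note, requires the block inversion lemma to surface the Schur complement.
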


\subsection{Linear-Gaussian update in precision form}

\begin{theorem}[Linear-Gaussian update]\label{thm:linupdate}
Let $Z\sim\mathcal{N}(0,Q^{-1})$ with $Q$ SPD and observations
\[
z=RZ+\varepsilon,\qquad \varepsilon\sim\mathcal{N}(0,N),
\]
with $N$ SPD. Then
\[
Z\mid z\sim\mathcal{N}(\bar z,Q_{\mathrm{post}}^{-1}),\qquad
Q_{\mathrm{post}}=Q+R^\top N^{-1}R,\qquad
\bar z=Q_{\mathrm{post}}^{-1}R^\top N^{-1}z.
\]
\end{theorem}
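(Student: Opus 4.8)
The plan is to derive the posterior by combining the Gaussian prior density with the likelihood of the observations and completing the square in the exponent. First I would write the joint (unnormalized) log-density of $(Z,z)$. Since $Z\sim\mathcal{N}(0,Q^{-1})$ has log-density proportional to $-\tfrac12 Z^\top Q Z$, and the observation model $z=RZ+\varepsilon$ with $\varepsilon\sim\mathcal{N}(0,N)$ gives the conditional density of $z$ given $Z$ proportional to $\exp\!\big(-\tfrac12 (z-RZ)^\top N^{-1}(z-RZ)\big)$, the posterior density $p(Z\mid z)\propto p(z\mid Z)\,p(Z)$ has log proportional to
\[
-\tfrac12 Z^\top Q Z-\tfrac12 (z-RZ)^\top N^{-1}(z-RZ).
\]

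Next I would expand the quadratic in $Z$ and collect terms. Grouping the quadratic-in-$Z$ contributions gives $-\tfrac12 Z^\top\!\big(Q+R^\top N^{-1}R\big)Z$, which identifies the posterior precision as $Q_{\mathrm{post}}=Q+R^\top N^{-1}R$; this matrix is SPD because $Q$ is SPD and $R^\top N^{-1}R$ is positive semidefinite (as $N$, hence $N^{-1}$, is SPD). The linear-in-$Z$ terms collect to $+Z^\top R^\top N^{-1}z$, while the remaining $z$-only term is absorbed into the normalizing constant. The exponent is therefore, up to an additive constant,
\[
-\tfrac12 Z^\top Q_{\mathrm{post}} Z+Z^\top R^\top N^{-1}z.
\]

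The key algebraic step is completing the square: I would write the displayed exponent as $-\tfrac12(Z-\bar z)^\top Q_{\mathrm{post}}(Z-\bar z)$ plus a constant, which forces the mean to satisfy $Q_{\mathrm{post}}\bar z=R^\top N^{-1}z$, i.e.\ $\bar z=Q_{\mathrm{post}}^{-1}R^\top N^{-1}z$. This shows the posterior is Gaussian with the claimed mean and precision, invoking the density characterization of Definition~\ref{def:prec} and Theorem~\ref{thm:preccov} to read off $\Cov(Z\mid z)=Q_{\mathrm{post}}^{-1}$. An entirely equivalent route, which I would mention as a cross-check, is to apply Theorem~\ref{thm:condgauss} to the joint covariance of $(z,Z)$ and then convert the resulting covariance-form expressions into precision form via the Woodbury/Sherman--Morrison identity; this recovers the same $\bar z$ and $Q_{\mathrm{post}}$ and makes explicit the consistency between the two forms of the update.

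The main obstacle is not conceptual but bookkeeping: ensuring that the symmetrization of cross terms is handled correctly (so that $Z^\top R^\top N^{-1}z$ and its transpose combine to the stated linear term using the symmetry of $N^{-1}$) and verifying that $Q_{\mathrm{post}}$ is genuinely invertible rather than merely positive semidefinite. The invertibility is immediate here since $Q$ is assumed SPD, so the sum with a positive semidefinite matrix remains SPD; I would flag this explicitly to justify that the posterior covariance $Q_{\mathrm{post}}^{-1}$ is well-defined without any rank condition on $R$.
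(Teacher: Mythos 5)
Your derivation is correct: the Bayes-rule factorization, the expansion of the quadratic (using symmetry of $N^{-1}$ to merge the two cross terms into $Z^\top R^\top N^{-1}z$), the completion of the square yielding $Q_{\mathrm{post}}\bar z=R^\top N^{-1}z$, and the observation that $Q_{\mathrm{post}}=Q+R^\top N^{-1}R$ is SPD without any rank condition on $R$ are all exactly what is needed. The paper states Theorem~\ref{thm:linupdate} without proof, treating it as a standard Gaussian identity, so there is no in-text argument to compare against; your complete-the-square route is the canonical precision-form derivation, and your suggested cross-check via Theorem~\ref{thm:condgauss} plus Woodbury is a sound way to reconcile it with the covariance-form update the paper records separately.
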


\subsection{Hard conditioning as delta constraints / point sources}

\begin{proposition}[Delta constraints as Fourier representations]\label{prop:vertex}
Formally, hard constraints $Z(x_i)=z_i$ satisfy
\[
\prod_{i=1}^n\delta(Z(x_i)-z_i)
=
\int_{\R^n}\prod_{i=1}^n\frac{d\lambda_i}{2\pi}\,
\exp\!\left(i\sum_{i=1}^n\lambda_i Z(x_i)-i\sum_{i=1}^n\lambda_i z_i\right).
\]
\end{proposition}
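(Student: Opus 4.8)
The plan is to reduce the claim to the elementary one-dimensional Fourier representation of the Dirac delta and then tensorize over the $n$ constraints. First I would recall the standard identity
\[
\delta(u)=\frac{1}{2\pi}\int_{\R} e^{i\lambda u}\,d\lambda,
\]
read as an equality of tempered distributions: the right-hand side is the inverse Fourier transform of the constant symbol $1\in\mathcal{S}'(\R)$, which is exactly $\delta$. Applying this to each factor with $u=Z(x_i)-z_i$ and its own dummy frequency $\lambda_i$ gives
\[
\delta(Z(x_i)-z_i)=\frac{1}{2\pi}\int_{\R} e^{i\lambda_i(Z(x_i)-z_i)}\,d\lambda_i .
\]

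Next I would multiply these $n$ scalar identities and interchange the product with the integrals. Fubini (at the formal/distributional level) collapses the $n$ one-dimensional integrals into a single integral over $\R^n$ against the product measure $\prod_i d\lambda_i/(2\pi)$, while the product of exponentials becomes a single exponential of the sum $i\sum_{i}\lambda_i\big(Z(x_i)-z_i\big)$. Splitting that sum into $i\sum_i\lambda_i Z(x_i)-i\sum_i\lambda_i z_i$ reproduces verbatim the right-hand side in the statement, so no genuine computation beyond bookkeeping is required.

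The only real content lies in the sense in which the identity holds, since neither side converges as an ordinary Lebesgue integral; this is the step I would flag as the main obstacle. Because the proposition is stated \emph{formally}, I would make it precise by either (a) interpreting both sides as tempered distributions in the vector $(Z(x_1),\dots,Z(x_n))$ and invoking Fourier inversion on $\mathcal{S}'(\R^n)$, or (b) exhibiting the identity as the $\varepsilon\downarrow 0$ limit of the Gaussian regularization $\delta_\varepsilon(u)=(2\pi\varepsilon)^{-1/2}e^{-u^2/(2\varepsilon)}$, whose Fourier transform $e^{-\varepsilon\lambda^2/2}$ makes each $\lambda_i$-integral absolutely convergent and recovers the stated formula in the limit. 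Route (b) is the more useful one for the paper's purposes: substituting this representation into a centered Gaussian density and completing the square in the $\lambda_i$ turns the hard constraints $Z(x_i)=z_i$ into point-source (delta-forcing) terms, which is precisely the source-representation reading of hard conditioning that Proposition~\ref{prop:vertex} is meant to support.
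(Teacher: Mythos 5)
Your proposal is correct and matches the standard argument this proposition rests on: the paper itself states the identity only \emph{formally} and offers no proof, and your route (one-dimensional Fourier representation of $\delta$, tensorized over the $n$ constraints, made rigorous via tempered distributions or Gaussian regularization) is exactly the derivation it implicitly assumes. Your closing observation about completing the square to turn hard constraints into point sources is precisely the reading the paper's follow-up remark intends.
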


\begin{remark}
In operator language, conditioning inserts linear functionals (rows of $R$). In interpretative field-theory language, the Fourier factors act as point sources.
\end{remark}

\subsection{Parameter inference and identifiability}\label{sec:inference}

Let $\theta$ collect hyperparameters in the prior precision operator and defect terms, e.g.
$\theta=(m,\alpha,\text{metric/anisotropy parameters},\ldots)$.
With linear observations $z=RZ+\varepsilon$, $\varepsilon\sim\mathcal{N}(0,N)$, the marginal likelihood is
\[
z \sim \mathcal{N}\bigl(0,\; \Sigma_\theta\bigr), \qquad
\Sigma_\theta = R\,\mathcal{L}_\theta^{-1}R^\top + N,
\]
and
\[
\ell(\theta) = -\tfrac12 z^\top \Sigma_\theta^{-1}z - \tfrac12\log\det(\Sigma_\theta) + \text{const}.
\]
This is the same likelihood optimized in SPDE--GMRF practice; in particular, it applies to defect/interface parameters such as $\alpha$
and to metric parameters controlling anisotropy \citep{LindgrenRueLindstrom2011,LindgrenBolinRue2022}.

\section{Defects and Interfaces via Surface Terms}\label{sec:defects}

\begin{proposition}[Surface penalty induces transmission condition]\label{prop:jump}
Let $\Omega=\Omega_1\cup\Omega_2$ with smooth interface $S=\partial\Omega_1\cap\partial\Omega_2$.
Consider
\[
\mathcal{E}[Z]
=\frac12\sum_{k=1}^2\int_{\Omega_k}\big(\nabla Z^\top A_k\nabla Z+c_k Z^2\big)\,dx
+\frac{\alpha}{2}\int_S (\trS Z)^2\,dS.
\]
Stationarity under variations supported on $S$ yields the interface condition
\[
(A_1\nabla Z_1)\cdot n_1+(A_2\nabla Z_2)\cdot n_2+\alpha\,\trS Z=0.
\]
\end{proposition}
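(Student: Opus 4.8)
The plan is to derive the interface condition from the first-order stationarity condition of the energy functional $\mathcal{E}$, computing the Gâteaux derivative in a direction $v$ that is allowed to be nonzero across the interface $S$. First I would compute $\frac{d}{dt}\mathcal{E}[Z+tv]\big|_{t=0}$, which produces the linear functional
\[
\mathcal{E}'[Z][v]=\sum_{k=1}^2\int_{\Omega_k}\big(\nabla v^\top A_k\nabla Z+c_k Z v\big)\,dx+\alpha\int_S (\trS Z)(\trS v)\,dS.
\]
Stationarity requires this to vanish for all admissible test functions $v$ in the energy space. The strategy is to integrate by parts separately on each subdomain $\Omega_1$ and $\Omega_2$, so that the gradient term $\int_{\Omega_k}\nabla v^\top A_k\nabla Z\,dx$ becomes $-\int_{\Omega_k}v\,\nabla\!\cdot(A_k\nabla Z)\,dx$ plus a boundary flux term $\int_{\partial\Omega_k}v\,(A_k\nabla Z)\cdot n_k\,dS$, where $n_k$ is the outward normal of $\Omega_k$.

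Next I would separate the contributions by support. Choosing test functions $v$ compactly supported inside a single $\Omega_k$ forces the interior Euler--Lagrange equation $-\nabla\!\cdot(A_k\nabla Z)+c_k Z=0$ on each subdomain, which in the constant-coefficient prototype specializes to the stated $-Z''+m^2 Z=0$ form. With the interior equations satisfied, $\mathcal{E}'[Z][v]$ reduces to the surface terms alone. The boundary of each $\Omega_k$ consists of the outer boundary $\partial\Omega$, where the energy-space membership (Dirichlet, say) kills $v$ and hence the flux contribution, together with the shared interface $S$. On $S$ the two outward normals are opposite, $n_2=-n_1$, and the remaining stationarity condition becomes
\[
\int_S \trS v\,\Big[(A_1\nabla Z_1)\cdot n_1+(A_2\nabla Z_2)\cdot n_2+\alpha\,\trS Z\Big]\,dS=0.
\]

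Finally, since this must hold for every admissible surface trace $\trS v$, the fundamental lemma of the calculus of variations (applied to the trace space on $S$) forces the bracketed expression to vanish pointwise $dS$-a.e., giving exactly the claimed transmission condition
\[
(A_1\nabla Z_1)\cdot n_1+(A_2\nabla Z_2)\cdot n_2+\alpha\,\trS Z=0.
\]
I expect the main obstacle to be functional-analytic rather than computational: making rigorous the trace and integration-by-parts steps in $H^1$, since $\nabla\!\cdot(A_k\nabla Z)$ need only exist distributionally and the normal flux $(A_k\nabla Z)\cdot n_k$ must be interpreted as a conormal trace in the dual space $H^{-1/2}(S)$ via a Green-type formula rather than as a classical boundary value. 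The surface penalty term is well-defined precisely because $\trS Z\in H^{1/2}(S)$ by the trace theorem, so the pairing against $\trS v$ makes sense; the derivation is cleanest under the smoothness hypotheses on $S$ and mild regularity of the coefficients $A_k$, and the statement should be read as the weak (variational) transmission condition, with the classical pointwise reading valid wherever elliptic regularity lifts $Z$ to $H^2$ on each side.
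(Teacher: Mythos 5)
Your derivation is correct and is exactly the standard variational argument the paper intends: first variation, subdomain-wise integration by parts, interior test functions to extract the Euler--Lagrange equations, and the fundamental lemma applied on the trace space of $S$ to isolate the flux-jump condition. The paper itself states Proposition~\ref{prop:jump} without proof, so there is no alternative route to compare against; your closing caveat about interpreting $(A_k\nabla Z)\cdot n_k$ as a conormal trace in $H^{-1/2}(S)$ is precisely the point the paper defers to Remark~\ref{rem:jumpform}, so nothing is missing.
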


\begin{remark}[Continuity vs.\ flux jump]\label{rem:cont-vs-flux}
In this single-field $H^{1}(\Omega)$ setting the trace of $Z$ is single-valued across $S$ (no jump in $Z$);
the defect term produces a jump only in the normal flux, as made explicit in Remark~\ref{rem:jumpform}.
\end{remark}

\begin{remark}[Flux jump form and trace regularity]\label{rem:jumpform}
Let $n:=n_1=-n_2$ be a chosen unit normal along $S$ (so the outward normals satisfy $n_2=-n_1$).
Then the interface condition in Proposition~\ref{prop:jump} can be written in the standard jump form
\[
\big[(A\nabla Z)\cdot n\big]_{S}+\alpha\,\trS Z=0,
\]
where $\big[w\big]_{S}:=w|_{\Omega_1}-w|_{\Omega_2}$ denotes the jump across $S$ and $\trS Z$ is the Sobolev trace of $Z$ on $S$.
In the $H^1$-based setting, $\trS Z\in H^{1/2}(S)$ is well-defined, so the surface penalty
$\frac{\alpha}{2}\int_S (\trS Z)^2\,dS$ is meaningful. If discontinuities of $Z$ are intended, one can instead penalize the trace jump,
e.g.\ $\int_S |\trS(Z_1-Z_2)|^2\,dS$.
\end{remark}

\section{Non-stationarity: Deformation vs Metric Geometry}\label{sec:deformation}

\begin{definition}[Pullback of a random field]\label{def:pullback}
Let $W$ be a random field on $\Omega'$ and $f:\Omega\to\Omega'$ measurable.
Define $Z(x)=W(f(x))$.
\end{definition}

\begin{proposition}[Covariance under pullback]\label{prop:pullcov}
If $W$ is second order with covariance $C_W$, then $Z$ has covariance
\[
C_Z(x,x')=C_W(f(x),f(x')).
\]
If $W$ is stationary on $\Omega'$, then $C_Z(x,x')=C_0(f(x)-f(x'))$, typically non-stationary on $\Omega$.
\end{proposition}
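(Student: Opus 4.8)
The plan is to reduce the entire statement to the defining relation $Z(x)=W(f(x))$ of Definition~\ref{def:pullback} together with the bilinearity of covariance; the first two assertions are pure substitution, and essentially all the (still modest) thought goes into making the closing phrase ``typically non-stationary'' precise. First I would record that $Z$ is second order: since $f$ is a fixed deterministic map, for each $x\in\Omega$ the variable $Z(x)$ is just $W$ evaluated at the point $f(x)\in\Omega'$, so $\mathbb{E}[Z(x)^2]=\mathbb{E}[W(f(x))^2]<\infty$ because $W$ is second order. Joint measurability of $(x,\omega)\mapsto W(f(x),\omega)$ follows from measurability of $f$ and of $W$, so $Z$ is a genuine random field and its second-order statistics are well defined.

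The covariance identity is then immediate. For any $x,x'\in\Omega$,
\[
C_Z(x,x')=\Cov\!\big(Z(x),Z(x')\big)=\Cov\!\big(W(f(x)),W(f(x'))\big)=C_W\!\big(f(x),f(x')\big),
\]
where the middle equality is the pullback definition and the outer equalities are the definitions of $C_Z$ and $C_W$. No property of $f$ beyond its being a fixed map is used, and the computation does not depend on whether the fields are centred, since covariance is unaffected by the mean. For the stationary specialization I would substitute $C_W(y,y')=C_0(y-y')$ with $y=f(x)$ and $y'=f(x')$, obtaining at once $C_Z(x,x')=C_0\big(f(x)-f(x')\big)$.

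The only part requiring care is the qualitative claim that $Z$ is ``typically'' non-stationary, which I would make precise by characterizing the stationarity-preserving maps. If $f(x)=Bx+b$ is affine then $f(x)-f(x')=B(x-x')$ depends on $(x,x')$ only through the lag $x-x'$, so $C_Z(x,x')=C_0\big(B(x-x')\big)$ is again stationary (possibly anisotropic). Conversely, if the increment $f(x)-f(x')$ depends only on $x-x'$, then writing $\psi(x-x'):=f(x)-f(x')$ and composing increments over three points ($\psi(x-z)=\psi(x-y)+\psi(y-z)$, i.e.\ $\psi(u+v)=\psi(u)+\psi(v)$) yields the additive Cauchy equation; for the measurable $\psi$ arising here this forces $\psi$ linear on a connected domain, whence $f$ is affine. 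Thus any non-affine deformation makes $f(x)-f(x')$ vary with the base point, and for a generic stationary $C_0$ this dependence is not absorbed, so $Z$ is non-stationary. I expect the Cauchy-equation step to be the one genuine point beyond substitution. The reason the statement is hedged as ``typically'' rather than stated absolutely is that a fully rigorous ``non-affine $\Rightarrow$ non-stationary'' claim valid for every $C_0$ would additionally have to exclude non-affine maps respecting the symmetry/level structure of a special $C_0$ (for a radial $C_0$, stationarity would only require $\norm{f(x)-f(x')}$ to depend on $x-x'$); settling whether such accidental preservation can occur is a rigidity question I would treat separately, and it is precisely what ``typically'' sidesteps.
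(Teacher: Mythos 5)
Your proposal is correct, and the paper in fact states Proposition~\ref{prop:pullcov} without any proof, evidently regarding it as immediate from Definition~\ref{def:pullback} by the substitution argument you give in your first two paragraphs. Your additional Cauchy-equation analysis --- showing that $f(x)-f(x')$ depends only on the lag exactly when $f$ is affine, and carefully hedging the converse for special $C_0$ --- goes beyond anything the paper claims but correctly substantiates the informal qualifier ``typically non-stationary,'' so it is a sound elaboration rather than a deviation.
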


\begin{remark}
In 2D, conformal maps induce conformal metrics and can be used to encode spatial deformation \citep{SampsonGuttorp1992,ChilesDelfiner2012}.
\end{remark}

\section{Change of Support and Spectral Identities}\label{sec:cos}

\begin{proposition}[Linear images of Gaussians]\label{prop:linimage}
If $Z\sim\mathcal{N}(0,C)$ and $Y=AZ$, then $Y\sim\mathcal{N}(0,ACA^\top)$.
\end{proposition}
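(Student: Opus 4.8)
The plan is to prove the statement via characteristic functions together with the uniqueness theorem, the most economical route and one that, crucially, covers the degenerate case in which $A$ is not of full row rank (so that $ACA^\top$ may be singular and $Y$ admits no Lebesgue density). First I would record the characteristic function encoded in the hypothesis: for $Z\sim\mathcal{N}(0,C)$ with $C\in\R^{n\times n}$ symmetric positive semidefinite, one has $\varphi_Z(t)=\mathbb{E}[e^{i\,t^\top Z}]=\exp(-\tfrac12\,t^\top C t)$ for all $t\in\R^n$. I take this as the defining property of a centered (possibly degenerate) Gaussian law, so that no density need be assumed.

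The key step is the substitution $t=A^\top s$. Writing $A\in\R^{m\times n}$ and $Y=AZ\in\R^m$, for any $s\in\R^m$ I would compute
\begin{align*}
\varphi_Y(s)
&=\mathbb{E}\bigl[e^{i\,s^\top A Z}\bigr]
=\mathbb{E}\bigl[e^{i\,(A^\top s)^\top Z}\bigr]
=\varphi_Z(A^\top s)\\
&=\exp\!\Bigl(-\tfrac12\,s^\top (ACA^\top)\,s\Bigr).
\end{align*}
This identifies $\varphi_Y$ as the characteristic function of $\mathcal{N}(0,ACA^\top)$; note that $ACA^\top$ is automatically symmetric and positive semidefinite, since $s^\top ACA^\top s=(A^\top s)^\top C(A^\top s)\ge 0$, so the target law is well defined.

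By the uniqueness theorem for characteristic functions the distribution of $Y$ is determined by $\varphi_Y$, whence $Y\sim\mathcal{N}(0,ACA^\top)$. As an independent sanity check at the level of moments, linearity of expectation gives $\mathbb{E}[Y]=A\,\mathbb{E}[Z]=0$ and $\Cov(Y)=A\,\Cov(Z)\,A^\top=ACA^\top$, consistent with the claimed parameters.

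There is no substantive obstacle here; the only point that warrants attention is conceptual rather than computational, namely ensuring the argument does not tacitly assume that $Y$ has a density. The transform route handles this cleanly, whereas a change-of-variables/density argument would fail precisely when $ACA^\top$ is singular. I therefore favour the characteristic-function approach exactly to keep the conclusion valid in the rank-deficient regime, which is the regime relevant to change of support, where $A$ typically aggregates many fine-scale cells into few coarse outputs.
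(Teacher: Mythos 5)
Your proof is correct. The paper itself states Proposition~\ref{prop:linimage} without proof, treating it as a standard Gaussian fact, so there is no in-text argument to compare against. Your characteristic-function route --- computing $\varphi_Y(s)=\varphi_Z(A^\top s)=\exp\bigl(-\tfrac12\,s^\top ACA^\top s\bigr)$ and invoking uniqueness --- is the standard complete argument, and your emphasis on the rank-deficient case is well placed: in the change-of-support context of Section~\ref{sec:cos} the matrix $A$ is typically a wide aggregation operator, so $ACA^\top$ can be singular and a density-based change-of-variables argument would not apply. The moment check at the end is consistent but, as you note, not a substitute for the distributional identification; the characteristic-function step is what actually establishes Gaussianity of $Y$. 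No gaps.
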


\begin{proposition}[Spectral effect of averaging filters]\label{prop:filter}
If $Z$ is stationary with spectral density $f_Z(k)$ and $Y=h*Z$, then
\[
f_Y(k)=|\widehat{h}(k)|^2\,f_Z(k).
\]
\end{proposition}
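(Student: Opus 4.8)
The plan is to compute the covariance of $Y=h*Z$ directly, verify that it depends only on the lag, and then pass to the Fourier domain where the convolution factorizes. Throughout I assume $Z$ is centered and second-order stationary with covariance $C_Z(\tau)$ admitting the Bochner representation $C_Z(\tau)=\int e^{ik\cdot\tau}f_Z(k)\,dk$, and that $h$ is real-valued and integrable enough (e.g.\ $h\in L^1\cap L^2$) for the convolution $Y(x)=\int h(x-u)Z(u)\,du$ and the manipulations below to be licit.

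First I would expand
\[
C_Y(x,x')=\mathbb{E}[Y(x)Y(x')]=\int\!\!\int h(x-u)\,h(x'-v)\,C_Z(u-v)\,du\,dv,
\]
interchanging expectation and integration by Fubini and using $\mathbb{E}[Z(u)Z(v)]=C_Z(u-v)$. Substituting $s=x-u$, $t=x'-v$ and writing $\tau=x-x'$ gives
\[
C_Y(x,x')=\int\!\!\int h(s)\,h(t)\,C_Z(\tau-s+t)\,ds\,dt,
\]
which depends on $(x,x')$ only through $\tau$; hence $Y$ is itself stationary, and its covariance is a reflected double convolution of $h$ against $C_Z$.

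Next I would take the Fourier transform of $C_Y$. Inserting the representation above and substituting $\sigma=\tau-s+t$ separates the triple integral into a product,
\[
f_Y(k)=\Bigl(\int h(s)e^{-ik\cdot s}\,ds\Bigr)\Bigl(\int h(t)e^{+ik\cdot t}\,dt\Bigr)\Bigl(\tfrac{1}{(2\pi)^d}\!\int C_Z(\sigma)e^{-ik\cdot\sigma}\,d\sigma\Bigr).
\]
The first factor is $\widehat{h}(k)$, the third is $f_Z(k)$, and since $h$ is real the second is $\widehat{h}(-k)=\overline{\widehat{h}(k)}$. Multiplying gives $f_Y(k)=|\widehat{h}(k)|^2 f_Z(k)$, as claimed.

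The only delicate points are analytic rather than structural: justifying the Fubini interchange and the reordering of the $\tau$-integral requires integrability of $h$ and of $C_Z$ (or $L^2$/Plancherel arguments when $f_Z$ is merely a density). I expect this to be the main technical obstacle in degenerate regimes---e.g.\ intrinsic or low-dimensional fields where $C_Z$ is not integrable, cf.\ the variograms-without-diagonal-kernels remark in Section~\ref{sec:kernel-vs-operator}. A clean way to sidestep it is the Cram\'er spectral representation: writing $Z(x)=\int e^{ik\cdot x}\,dW_Z(k)$ with orthogonal increments $\mathbb{E}|dW_Z(k)|^2=f_Z(k)\,dk$, one computes $Y(x)=\int e^{ik\cdot x}\,\widehat{h}(k)\,dW_Z(k)$, so that $dW_Y(k)=\widehat{h}(k)\,dW_Z(k)$ and $f_Y(k)=|\widehat{h}(k)|^2 f_Z(k)$ follows immediately from the isometry, with minimal integrability demands on $C_Z$.
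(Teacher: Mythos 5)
Your argument is correct, and in fact the paper offers no proof of this proposition at all --- it is stated as a standard fact about linear filtering of stationary fields, so there is nothing to compare against on the paper's side. Your direct computation is sound: the double-convolution form $C_Y(\tau)=\int\!\!\int h(s)h(t)C_Z(\tau-s+t)\,ds\,dt$ correctly establishes stationarity of $Y$, and the change of variables $\sigma=\tau-s+t$ factorizes the Fourier integral into $\widehat{h}(k)\,\overline{\widehat{h}(k)}\,f_Z(k)$ as you state (using realness of $h$ for the conjugate factor). You are also right to flag that the Fubini steps need $h\in L^1$ and integrability of $C_Z$ (or an $L^2$/Plancherel workaround), and your fallback via the Cram\'er spectral representation $dW_Y(k)=\widehat{h}(k)\,dW_Z(k)$ is the cleaner route precisely because it replaces those integrability demands on $C_Z$ by the isometry property of the stochastic integral; given that the paper elsewhere emphasizes regimes where pointwise or integrable covariances may fail, that second route is the one I would lead with. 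One cosmetic point: fix a single Fourier/spectral-density normalization at the outset (you implicitly use $C_Z(\tau)=\int e^{ik\cdot\tau}f_Z(k)\,dk$ together with $\widehat{h}(k)=\int h(s)e^{-ik\cdot s}\,ds$, which is consistent, but worth stating explicitly since the $(2\pi)^d$ factors must cancel exactly).
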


\begin{remark}
Coarse supports suppress high-frequency modes; refitting within a parametric family induces a parameter flow.
\end{remark}

\section{Boundaries, DtN Maps, and Schur Complements}\label{sec:boundary}

\subsection{Dirichlet-to-Neumann (DtN) map}

\begin{definition}[Dirichlet-to-Neumann map]\label{def:dtn}
Let $\Omega\subset\R^d$ be a bounded Lipschitz domain and let
\[
\mathcal{L}u := -\nabla\cdot(A\nabla u)+cu
\]
with $A$ bounded, measurable and uniformly elliptic, and $c\in L^\infty(\Omega)$ with $c\ge 0$.
For $\varphi\in H^{1/2}(\partial\Omega)$ choose any lifting $w_\varphi\in H^1(\Omega)$ such that
$\trB w_\varphi=\varphi$, and let $v_\varphi\in H^1_0(\Omega)$ be the unique weak solution of
\[
\int_\Omega \nabla v_\varphi^\top A\nabla \eta\,dx+\int_\Omega c\,v_\varphi\,\eta\,dx
=
-\int_\Omega \nabla w_\varphi^\top A\nabla \eta\,dx-\int_\Omega c\,w_\varphi\,\eta\,dx,
\qquad \forall \eta\in H^1_0(\Omega).
\]
Set $u_\varphi:=w_\varphi+v_\varphi\in H^1(\Omega)$; then $\mathcal{L}u_\varphi=0$ in the weak sense and
$\trB u_\varphi=\varphi$.

The Dirichlet-to-Neumann operator is the bounded map $\Lambda:H^{1/2}(\partial\Omega)\to H^{-1/2}(\partial\Omega)$ defined by
\[
\langle \Lambda\varphi,\psi\rangle_{H^{-1/2},H^{1/2}}
:=\int_\Omega \nabla u_\varphi^\top A\nabla u_\psi\,dx+\int_\Omega c\,u_\varphi\,u_\psi\,dx,
\qquad \forall \psi\in H^{1/2}(\partial\Omega),
\]
where $u_\psi$ is constructed as above. Equivalently, $\Lambda\varphi$ is the conormal flux
$(A\nabla u_\varphi)\cdot n$ on $\partial\Omega$ understood as an element of $H^{-1/2}(\partial\Omega)$.
\end{definition}

\begin{remark}[Notation]
When $u_\varphi$ is regular enough, the duality pairing
$\langle \Lambda\varphi,\varphi\rangle_{H^{-1/2},H^{1/2}}$ coincides with the boundary integral
$\int_{\partial\Omega}\varphi\,(\Lambda\varphi)\,dS$.
\end{remark}

\begin{theorem}[DtN energy identity]\label{thm:dtn}
Assuming sufficient regularity (or interpreting the right-hand side as a duality pairing),
\[
\int_\Omega \nabla u_\varphi^\top A\nabla u_\varphi\,dx+\int_\Omega c\,u_\varphi^2\,dx
=\langle \Lambda\varphi,\varphi\rangle_{H^{-1/2},H^{1/2}}.
\]
\end{theorem}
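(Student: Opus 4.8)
The plan is to recognize that the stated identity is simply the diagonal evaluation of the bilinear form \emph{defining} $\Lambda$ in Definition~\ref{def:dtn}, and then to bridge this variational quantity to the literal flux/energy reading via Green's first identity. First I would set $\psi=\varphi$ directly in the defining formula for $\Lambda$. The harmonic extension $u_\varphi=w_\varphi+v_\varphi$ is uniquely determined by the boundary datum $\varphi$ (independently of the chosen lifting $w_\varphi$: any two liftings differ by an element of $H^1_0(\Omega)$, and $v_\varphi$ compensates exactly, so $u_\varphi$ is well defined), whence $u_\psi=u_\varphi$ when $\psi=\varphi$. Substituting into the definition yields
\[
\langle \Lambda\varphi,\varphi\rangle_{H^{-1/2},H^{1/2}}
=\int_\Omega \nabla u_\varphi^\top A\nabla u_\varphi\,dx+\int_\Omega c\,u_\varphi^2\,dx,
\]
which is exactly the claim, holding at the level of the duality pairing with no extra regularity. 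This is the \emph{rigorous backbone} version of the identity.

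Second, to justify the \emph{energy-identity} reading---that the right-hand side equals the boundary flux $\int_{\partial\Omega}\varphi\,(A\nabla u_\varphi\cdot n)\,dS$---I would invoke Green's first identity. Since $u_\varphi$ solves $\mathcal{L}u_\varphi=0$ weakly, testing the weak formulation against $u_\varphi$ itself (under sufficient regularity) gives
\[
\int_\Omega \nabla u_\varphi^\top A\nabla u_\varphi\,dx+\int_\Omega c\,u_\varphi^2\,dx
=\int_\Omega u_\varphi\,(\mathcal{L}u_\varphi)\,dx+\int_{\partial\Omega}(A\nabla u_\varphi\cdot n)\,\trB u_\varphi\,dS,
\]
and the bulk term vanishes identically because $\mathcal{L}u_\varphi=0$, leaving the boundary term with $\trB u_\varphi=\varphi$. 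This both recovers the claimed identity and identifies $\Lambda\varphi$ with the conormal flux $(A\nabla u_\varphi\cdot n)$, consistent with the closing sentence of Definition~\ref{def:dtn}.

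The main obstacle is trace regularity: for $u_\varphi\in H^1(\Omega)$ the conormal derivative $(A\nabla u_\varphi\cdot n)$ exists only as an element of $H^{-1/2}(\partial\Omega)$, so the boundary ``integral'' in the second step must be read as the $H^{-1/2}$--$H^{1/2}$ duality pairing rather than a Lebesgue integral. The variational first step sidesteps this entirely and is the version valid without any regularity hypothesis; the literal integral form is recovered only when $u_\varphi$ is smooth enough for the classical Green identity to hold pointwise on $\partial\Omega$. This is precisely the dichotomy hedged in the theorem statement (``sufficient regularity, or interpret the right-hand side as a duality pairing''), so the proof reduces to the tautological diagonal substitution together with the integration-by-parts bridge, with the regularity caveat tracked explicitly.
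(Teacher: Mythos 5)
Your proof is correct and is essentially the intended argument: the paper states this theorem without a written proof precisely because it is the diagonal case $\psi=\varphi$ of the bilinear form that \emph{defines} $\Lambda$ in Definition~\ref{def:dtn}, which is exactly your first step (and your verification that $u_\varphi$ is independent of the lifting is the one point that genuinely needs checking). Your second step, identifying the pairing with the conormal flux via Green's identity under extra regularity, correctly accounts for the theorem's parenthetical hedge and matches the closing sentence of the definition.
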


\subsection{Schur complement reduction}

\begin{lemma}[Schur complement as marginal precision]\label{lem:schur}
Let $(U,V)$ be jointly Gaussian with block precision
\[
Q=\begin{pmatrix}
Q_{UU}&Q_{UV}\\
Q_{VU}&Q_{VV}
\end{pmatrix},\qquad Q\succ 0.
\]
Then the marginal law of $U$ has precision
\[
Q_{\mathrm{marg}}=Q_{UU}-Q_{UV}Q_{VV}^{-1}Q_{VU}.
\]
\end{lemma}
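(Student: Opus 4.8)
The plan is to compute the marginal law of $U$ directly by integrating out $V$ from the joint Gaussian density and reading off the precision from the resulting quadratic form. First I would record the well-definedness points: since $Q\succ 0$, every principal submatrix is positive definite, so in particular $Q_{VV}\succ 0$ is invertible and the Schur complement $Q_{\mathrm{marg}}=Q_{UU}-Q_{UV}Q_{VV}^{-1}Q_{VU}$ is well-defined; positive definiteness of $Q_{\mathrm{marg}}$ itself (so that it is a legitimate precision) also follows from $Q\succ 0$. By Definition~\ref{def:prec} the joint density is proportional to $\exp(-\tfrac12\,(u,v)^\top Q\,(u,v))$, and using symmetry $Q_{VU}=Q_{UV}^\top$ I would expand
\[
\begin{pmatrix}u\\v\end{pmatrix}^{\!\top} Q \begin{pmatrix}u\\v\end{pmatrix}
= u^\top Q_{UU}\,u + 2\,u^\top Q_{UV}\,v + v^\top Q_{VV}\,v.
\]

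The key algebraic step is completing the square in $v$. Setting $\hat v:=-Q_{VV}^{-1}Q_{VU}\,u$, I would rewrite
\[
v^\top Q_{VV}\,v + 2\,u^\top Q_{UV}\,v
= (v-\hat v)^\top Q_{VV}\,(v-\hat v) - u^\top Q_{UV}Q_{VV}^{-1}Q_{VU}\,u.
\]
Substituting back, the exponent separates into a shifted quadratic in $v$ with fixed covariance $Q_{VV}^{-1}$ and a term depending on $u$ only through $u^\top(Q_{UU}-Q_{UV}Q_{VV}^{-1}Q_{VU})\,u=u^\top Q_{\mathrm{marg}}\,u$. Integrating out $v$ over $\R^{\dim V}$ then contributes only a $u$-independent normalizing constant, so the marginal density of $U$ is proportional to $\exp(-\tfrac12\,u^\top Q_{\mathrm{marg}}\,u)$; by Definition~\ref{def:prec} this identifies the marginal precision of $U$ as $Q_{\mathrm{marg}}$.

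As a cross-check I would record the covariance-side argument via Theorem~\ref{thm:preccov}: the marginal covariance of $U$ is the $UU$-block of $Q^{-1}$, and the standard block-inversion identity gives $(Q^{-1})_{UU}=(Q_{UU}-Q_{UV}Q_{VV}^{-1}Q_{VU})^{-1}=Q_{\mathrm{marg}}^{-1}$, so inverting recovers the same precision. There is no substantial obstacle here; the only points requiring care are the invertibility of $Q_{VV}$ and the finiteness and $u$-independence of the completed-square Gaussian integral over $v$, both of which follow directly from $Q_{VV}\succ 0$, a consequence of $Q\succ 0$.
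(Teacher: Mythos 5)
Your argument is correct. Note that the paper states Lemma~\ref{lem:schur} without proof, so there is no in-text argument to compare against; your completing-the-square derivation is the standard one and fills that gap cleanly. The key algebraic identity
\[
v^\top Q_{VV}v+2u^\top Q_{UV}v=(v-\hat v)^\top Q_{VV}(v-\hat v)-u^\top Q_{UV}Q_{VV}^{-1}Q_{VU}u,
\qquad \hat v=-Q_{VV}^{-1}Q_{VU}u,
\]
checks out (using $Q_{VU}=Q_{UV}^\top$), the Gaussian integral over $v$ is indeed $u$-independent because $Q_{VV}\succ 0$, and you correctly note that both $Q_{VV}\succ 0$ and $Q_{\mathrm{marg}}\succ 0$ follow from $Q\succ 0$, so $Q_{\mathrm{marg}}$ is a legitimate precision. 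Your covariance-side cross-check via the block-inversion identity $(Q^{-1})_{UU}=Q_{\mathrm{marg}}^{-1}$ is the shortest route given Theorem~\ref{thm:preccov} and is arguably the proof most consonant with the paper's precision--covariance duality theme; either version would serve as the paper's proof.
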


\begin{remark}
DtN operators are continuum boundary effective operators; Schur complements are their discrete analogues.
\end{remark}

\section{Generating Functionals and Conditional Simulation}\label{sec:path}

\begin{theorem}[Gaussian generating functional]\label{thm:genfun}
If $Z\sim\mathcal{N}(0,Q^{-1})$, then for any $J$,
\[
\mathbb{E}\big[e^{J^\top Z}\big]=\exp\!\left(\tfrac12 J^\top Q^{-1}J\right).
\]
\end{theorem}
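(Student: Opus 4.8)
The plan is to prove the identity by direct integration against the Gaussian density from Definition~\ref{def:prec}, the only technique required being completion of the square in the quadratic exponent. I read ``any $J$'' as any $J\in\R^n$; the case of complex $J$ (the characteristic function) then follows by analytic continuation, as noted at the end.

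First I would write the left-hand side as
\[
\mathbb{E}\big[e^{J^\top Z}\big]
=\frac{\det(Q)^{1/2}}{(2\pi)^{n/2}}
\int_{\R^n} \exp\!\left(J^\top z-\tfrac12 z^\top Q z\right)dz,
\]
which is finite for every real $J$ because the quadratic term $-\tfrac12 z^\top Q z$ (with $Q\succ0$) dominates the linear term $J^\top z$ at infinity, guaranteeing absolute convergence. The key algebraic step is to complete the square: using that $Q$, and hence $Q^{-1}$, is symmetric,
\[
J^\top z-\tfrac12 z^\top Q z
=-\tfrac12\,(z-Q^{-1}J)^\top Q\,(z-Q^{-1}J)+\tfrac12\,J^\top Q^{-1}J.
\]
This identity is verified by expanding the squared form: the quadratic part reproduces $-\tfrac12 z^\top Q z$, the two cross terms combine (being equal scalars) into $J^\top z$, and the constant $-\tfrac12 J^\top Q^{-1}J$ produced by the expansion is cancelled by the explicit $+\tfrac12 J^\top Q^{-1}J$.

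Next I would pull the constant factor $\exp(\tfrac12 J^\top Q^{-1}J)$ out of the integral and perform the translation $w=z-Q^{-1}J$, under which Lebesgue measure is invariant. The remaining integral is exactly the Gaussian normalization,
\[
\frac{\det(Q)^{1/2}}{(2\pi)^{n/2}}
\int_{\R^n}\exp\!\left(-\tfrac12\, w^\top Q\, w\right)dw=1,
\]
since the integrand is precisely the density of Definition~\ref{def:prec} (equivalently, diagonalize $Q$ by an orthogonal change of variables and reduce to the one-dimensional Gaussian integral). Cancelling this factor against the prefactor leaves exactly $\exp(\tfrac12 J^\top Q^{-1}J)$, as claimed.

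There is no substantive obstacle here; the computation is routine. The only points deserving care are (i) the convergence of the integral, which holds for all real $J$ by the sign-definiteness of $Q$, and (ii) the correct use of the symmetry of $Q^{-1}$ when completing the square. Finally, both sides are entire functions of $J\in\mathbb{C}^n$ that agree on $\R^n$, so by analytic continuation the identity extends to complex arguments; taking $J=it$ recovers the characteristic function $\mathbb{E}[e^{it^\top Z}]=\exp(-\tfrac12 t^\top Q^{-1}t)$.
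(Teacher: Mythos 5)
Your proof is correct and complete: the completion of the square, the translation invariance of Lebesgue measure, the Gaussian normalization, and the convergence remark for real $J$ (guaranteed by $Q\succ 0$) are all in order. The paper states Theorem~\ref{thm:genfun} as a standard fact without supplying a proof, so there is nothing to compare against; your argument is exactly the textbook derivation one would insert here, and the closing remark on analytic continuation to the characteristic function is a harmless bonus.
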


\begin{lemma}[Sampling from a precision]\label{lem:sampling}
Let $Q=LL^\top$ be a Cholesky factorization and $\xi\sim\mathcal{N}(0,I)$.
If $L^\top \eta=\xi$, then $\eta\sim\mathcal{N}(0,Q^{-1})$.
\end{lemma}

\begin{corollary}[Posterior conditional simulation]\label{cor:postsamp}
In Theorem~\ref{thm:linupdate}, a conditional simulation is
\[
Z^{(s)}=\bar z+\eta,\qquad \eta\sim\mathcal{N}(0,Q_{\mathrm{post}}^{-1}).
\]
\end{corollary}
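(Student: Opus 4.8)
The plan is to read off the target law directly from Theorem~\ref{thm:linupdate} and then verify that the proposed construction reproduces exactly that law, deferring the constructive generation of the noise term to Lemma~\ref{lem:sampling}. Concretely, fix the observed data vector $z$; Theorem~\ref{thm:linupdate} identifies the conditional law as
\[
Z\mid z\sim\mathcal{N}(\bar z,\,Q_{\mathrm{post}}^{-1}),
\]
where $\bar z=Q_{\mathrm{post}}^{-1}R^\top N^{-1}z$ is a deterministic vector once $z$ is fixed, and $Q_{\mathrm{post}}=Q+R^\top N^{-1}R\succ0$. The goal is therefore to exhibit a random vector having precisely this mean and covariance.

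First I would introduce $\eta\sim\mathcal{N}(0,Q_{\mathrm{post}}^{-1})$, drawn independently of the data, and set $Z^{(s)}=\bar z+\eta$. Since $\bar z$ is a fixed vector and $\eta$ is Gaussian, $Z^{(s)}$ is an affine image of a Gaussian and hence Gaussian. A short moment computation then gives $\mathbb{E}[Z^{(s)}]=\bar z+\mathbb{E}[\eta]=\bar z$ and $\Cov(Z^{(s)})=\Cov(\eta)=Q_{\mathrm{post}}^{-1}$, because translation by a deterministic vector shifts the mean while leaving the covariance unchanged. As a Gaussian law is determined by its first two moments, this shows $Z^{(s)}\sim\mathcal{N}(\bar z,Q_{\mathrm{post}}^{-1})$, i.e.\ $Z^{(s)}$ is distributed according to the posterior and is thus a valid conditional simulation.

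To make the construction operational rather than merely existential, I would finally invoke Lemma~\ref{lem:sampling} to generate $\eta$: compute a Cholesky factorization $Q_{\mathrm{post}}=LL^\top$, draw a white-noise vector $\xi\sim\mathcal{N}(0,I)$, and solve the triangular system $L^\top\eta=\xi$; by the lemma $\eta\sim\mathcal{N}(0,Q_{\mathrm{post}}^{-1})$, exactly as required. The only point demanding care is conceptual rather than technical: one must treat $\bar z$ as a deterministic shift under the conditioning on the observed $z$ and keep the injected noise $\eta$ independent of that data, so that the sum carries the stated posterior law. Given Theorem~\ref{thm:linupdate} and Lemma~\ref{lem:sampling}, there is no substantive analytical obstacle; the statement is a direct corollary combining the posterior moments with the affine invariance of the Gaussian family.
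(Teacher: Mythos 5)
Your proof is correct and matches the route the paper intends: the corollary follows immediately by reading the posterior law $\mathcal{N}(\bar z,Q_{\mathrm{post}}^{-1})$ from Theorem~\ref{thm:linupdate}, noting that adding independent $\eta\sim\mathcal{N}(0,Q_{\mathrm{post}}^{-1})$ to the deterministic shift $\bar z$ reproduces exactly that law, and invoking Lemma~\ref{lem:sampling} for the constructive generation of $\eta$. The paper states this as an unproved corollary, and your argument supplies precisely the one-line verification it leaves implicit.
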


\section{Multivariate fields and matrix-valued operators}\label{sec:multivariate}

Multivariate Gaussian fields can be specified by block (matrix-valued) precision operators; after discretisation this yields an SPD block matrix $Q$ and cross-covariances from $C=Q^{-1}$.
A simple separable (coregionalization) case is
\[
Q = G^{-1}\otimes Q_0 \quad\Rightarrow\quad C=G\otimes Q_0^{-1},
\]
recovering a shared spatial kernel with variable-to-variable coupling \citep{ChilesDelfiner2012}.

\section{Synthesis: Dualities and Implementation Patterns}\label{sec:synthesis}

\subsection{Three Gaussian-level dualities}
\begin{itemize}[leftmargin=2em]
\item \textbf{Operator vs kernel:} specifying $Q$ is equivalent to specifying $C=Q^{-1}$ (discretely), and variationally $G=\mathcal{L}^{-1}$ is the primary object (Section~\ref{sec:kernel-vs-operator}).
\item \textbf{Metric vs deformation:} non-stationarity can be encoded either by variable coefficients/metrics or by pullback of a stationary field (Section~\ref{sec:deformation}) \citep{SampsonGuttorp1992}.
\item \textbf{Bulk vs boundary:} DtN maps and Schur complements encode how eliminated bulk degrees of freedom induce boundary/subdomain effective operators (Section~\ref{sec:boundary}).
\end{itemize}

\subsection{Minimal implementation pseudocode}

\paragraph{(A) FEM/SPDE assembly + kriging + conditional simulation (precision form).}
\begin{lstlisting}
Input: mesh/graph, operator parameters (A(x), c(x), BCs), data locations/rows R, noise N
1) Assemble sparse precision Q from bilinear form a(u,v) (FEM stiffness + mass terms),
   including boundary conditions and (optional) interface surface terms.
2) Posterior precision: Q_post = Q + R^T N^{-1} R
3) Posterior mean: solve Q_post * zbar = R^T N^{-1} z
4) Conditional simulation:
   - sample g ~ N(0,I)
   - compute eta = Q_post^{-1/2} g  (exact via Cholesky; approximate via Lanczos/rational)
   - sample = zbar + eta
Output: zbar, samples
\end{lstlisting}

\paragraph{(B) Domain reduction by Schur complement.}
\begin{lstlisting}
Partition nodes into keep (K) and eliminate (E):
Q = [[Q_KK, Q_KE],
     [Q_EK, Q_EE]]
Effective precision on K:
Q_eff = Q_KK - Q_KE * (Q_EE^{-1}) * Q_EK
Use Q_eff for inference restricted to subdomain K.
\end{lstlisting}

\paragraph{(C) Boundary-condition sensitivity check.}
\begin{lstlisting}
On a 1D interval (0,L), build Q_D and Q_N enforcing Dirichlet vs Neumann BCs.
Compute C_D = Q_D^{-1}, C_N = Q_N^{-1}.
Compare variograms gamma_D(i,j)=0.5*(C_D(ii)+C_D(jj)-2*C_D(ij)) and gamma_N analogously.
\end{lstlisting}


\end{document}